\newtheorem{theorem}{Theorem}{\bfseries}{\normalfont}
\newtheorem{proposition}{Proposition}{\bfseries}{\normalfont}
\newtheorem{lemma}{Lemma}{\bfseries}{\normalfont}
{\bfseries}{\normalfont}
\newtheorem{corollary}{Corollary}{\bfseries}{\normalfont}
 \newtheorem{obs}{Observation}{\bfseries}{\normalfont}
\DeclareMathOperator{\dist}{dist}
\DeclareMathOperator{\conf}{conf}
\newcommand{\decprob}[3]{%
\vspace{3pt}
\begin{center}%
    \begin{minipage}{0.9\linewidth}%
      \textsc{#1}\\
      \textbf{Input:} #2\\
      \textbf{Question:} #3
    \end{minipage}%
  \end{center}%
  \vspace{3pt}
}
\def\NAT@spacechar{~}
\crefname{rrule}{Rule}{Rules}
\newcommand{\sclub}{\textsc{$s$-Club}\xspace}
\newcommand{\tclub}{\mbox{\textsc{2-Club}}\xspace}
\newcommand{\Clique}{\textsc{Clique}\xspace}
\newcommand{\MCC}{\textsc{Multicolored Clique}\xspace}
\renewcommand{\c}{\operatorname{c}}
\DeclareRobustCommand{\NoKernelAssume}{$\text{NP}\subseteq \text{{coNP/poly}}$}
\newcommand{\C}{\mathcal{C}}
\newcommand{\size}{\ell}
\begin{document}


%
%
%
  \title{On Structural Parameterizations for the 2-Club Problem\footnote{An extended abstract of this paper appeared in \emph{Proceedings of the 39th International Conference on Current Trends in Theory and Practice of Computer Science (SOFSEM'13)},
 Jan. 2013, volume~7741 of LNCS, pages 233-243, Springer, 2013~\cite{HKN12b}.}}
  \author{Sepp Hartung \thanks{TU Berlin, Germany, sepp.hartung@tu-berlin.de}\and Christian Komusiewicz \thanks{TU Berlin, Germany, christian.komusiewicz@tu-berlin.de} \and André Nichterlein \thanks{TU Berlin, Germany, andre.nichterlein@tu-berlin.de} \and Ondřej Suchý \thanks{Czech Technical University in Prague, Czech Republic, ondrej.suchy@fit.cvut.cz}}
  
\maketitle
\begin{abstract}
 The NP-hard \tclub problem is, given an undirected graph $G=(V,E)$ and~$\size\in \mathbb{N}$, to decide whether there is a vertex set~$S\subseteq V$ of size at least~$\size$ such that the induced subgraph~$G[S]$ has diameter at most two.  We make progress towards a systematic classification of the complexity of \tclub with respect to a hierarchy of prominent structural graph parameters.  First, we present the following tight NP-hardness results: \tclub is NP-hard on graphs that become bipartite by deleting one vertex, on graphs that can be covered by three cliques, and on graphs with \emph{domination number} two and \emph{diameter} three. 
 Then, we consider the parameter \emph{$h$-index} of the input graph. This parameter is motivated by real-world instances and the fact that \tclub is fixed-parameter tractable with respect to the larger parameter \emph{maximum degree}.  We present an algorithm that solves \tclub in~$|V|^{f(k)}$ time with~$k$ being the \emph{$h$-index}. By showing W[1]-hardness for this parameter, we provide evidence that the above algorithm cannot be improved to a fixed-parameter algorithm.  Furthermore, the reduction used for this hardness result can be modified to show that \tclub is NP-hard if the input graph has constant \emph{degeneracy}.
 Finally, we show that \tclub is fixed-parameter tractable with respect to \emph{distance~to~cographs}.

%
%
\end{abstract}
%


\section{Introduction}
The identification of cohesive subnetworks is an important task in the analysis of social and biological networks, since these subnetworks are likely to represent communities or functional subnetworks within the large network. The natural cohesiveness requirement is to demand that the subnetwork is a complete graph, a clique. However, this requirement is often too restrictive and thus relaxed definitions of cohesive graphs such as $s$-cliques~\cite{Alb73}, $s$-plexes~\cite{SF78}, and $s$-clubs~\cite{Mok79} have been proposed. In this work, we study the problem of finding large~$s$-clubs within the input network. An~$s$-club is a vertex set that induces a subgraph of diameter at most~$s$. Thus, $s$-clubs are distance-based relaxations of cliques, which are exactly the graphs of diameter one. For constant~$s\ge 1$, the problem of finding~$s$-clubs is defined as follows. 
\decprob{\sclub{}}%
{An undirected graph~$G = (V,E)$ and $\size \in \mathbb{N}$.}  {Is
  there a vertex set~$S \subseteq V$ of size at least~$\size$ such
  that~$G[S]$ has diameter at most~$s$?}
In this work, we study the computational complexity of \tclub{}, that is, the special case of~$s=2$. 
This is motivated by the following two considerations. First, \tclub is an important special case concerning the applications: For biological networks, 2-clubs and 3-clubs have been identified as the most reasonable diameter-based relaxations of cliques~\cite{Pasu08}. Further, \citet{BBT05} also proposed to compute 2-clubs and 3-clubs for analyzing protein interaction networks.
\tclub also has applications in the analysis of social networks~\cite{ML06}. Consequently, the extensive experimental study concentrate on finding 2-~and 3-clubs~\cite{AC12,AC11,BBT05,BLP02,BM11,CHLS11, HKN12}. Second, \tclub is the most basic variant of \sclub that is different from the \textsc{Clique} problem which is equivalent to~\textsc{1-Club}. For example, being a clique is a hereditary graph property, that is, it is closed under vertex deletion. In contrast, being a 2-club is not hereditary, since deleting vertices can increase the diameter of a graph. Hence, it is interesting to spot differences in the computational complexity of the two problems. 

In the spirit of multivariate algorithmics~\cite{Fel09,Nie10}, we aim to describe how structural properties of the input graph determine the computational complexity of~\tclub. We want to determine sharp boundaries between tractable and intractable special cases of \tclub, and whether some graph properties, especially those motivated by the structure of social and biological networks, can be exploited algorithmically. By arranging the parameters in a hierarchy (ranging from large to small parameters) we draw a border line between tractability and intractability to obtain a systematic view on ``stronger parameterizations'' (refer to~\cite{KN12} for further discussion of the parameter hierarchy and its application). A similar approach was followed for other hard graph problems such as \textsc{Odd Cycle Transversal}~\cite{JK11} and for the computation of the pathwidth of a graph~\cite{BJK12}.

The structural properties that we consider, called structural graph parameters, 
are usually described by integers;
well-known examples of such parameters are the maximum degree or the
treewidth of a graph.
Our results use the classical framework of
NP-hardness as well as the framework of parameterized complexity
to show (parameterized) tractability and intractability of \tclub with respect to the structural graph parameters under consideration.
That is, for some graph parameters we show that \tclub becomes NP-hard in case of constant parameter values, whereas for other graph parameters we show fixed-parameter (in)tractability.

\subsection{ Related Work}
For all~$s\ge 1$,~\sclub is NP-complete on graphs of
diameter~$s+1$~\cite{BBT05}; \textsc{2-Club} is NP-complete even on
split graphs and, thus, also on chordal graphs~\cite{BBT05}.\footnote{\looseness=-1 An NP-hardness reduction given by \citet[Theorem~1]{BBT05} can be easily modified such that the  \tclub instance is a split graph (make the vertex set~$E$ a clique).}
In contrast,~\tclub~is solvable in polynomial time on bipartite graphs, on trees, and on interval graphs~\cite{Sch09}.  \citet{GHKR13} consider the complexity of \sclub in special graph classes. For instance they prove polynomial-time solvability of \sclub on choral bipartite, strongly chordal and distance hereditary graphs. Additionally, it is proven that on a superclass of these graph classes, called weakly chordal graphs, it is polynomial time solvable for odd~$s$ and NP-hard for even~$s$.

 \sclub  is well-understood from the viewpoint of approximation algorithms~\cite{AMS10}: 
 It is NP-hard to approximate \sclub within a factor of $n^{\frac{1}{2}-\epsilon}$ for any $\epsilon>0$. On the positive side, it has been shown that a largest set consisting of a vertex together with all vertices within distance~$\left\lfloor \frac{s}{2}\right\rfloor$ is a factor $n^{\frac{1}{2}}$ approximation for even~$s\ge 2$ and a factor $n^{\frac{2}{3}}$ approximation for odd~$s\ge 3$.  
Several heuristics~\cite{BLP00,CHLS11,AC11, CHLS11}, integer linear programming formulations~\cite{BLP02,BBT05,AC12}, fixed-parameter algorithms~\cite{SKMN11,HKN12}, and branch-and-bound algorithms~\cite{BLP02} have been proposed and experimentally evaluated~\cite{BM11,HKN12}.
%
 
From the viewpoint of parameterized algorithmics, \textsc{1-Club} is equivalent to \textsc{Clique} and thus W[1]-hard with respect to~$\size$~\cite{DF99}. 
In contrast, for all~$s\ge 2$, \sclub is fixed-parameter tractable with respect to~$\size$~\cite{SKMN11,CHLS11} 
and also with respect to the parameter treewidth of~$G$~\cite{Sch09}.
Additionally, a search tree-based algorithm that branches into the two possibilities to delete one of two vertices with distance more than~$s$ achieves a running time of $O(2^{n-\size}\cdot nm)$ for the \emph{dual parameter} $n-\size$ which measures the \emph{distance to a $s$-club}~\cite{SKMN11}.\footnote{\looseness=-1 \citet{SKMN11} considered finding an $s$-club of size \emph{exactly}~$\size$. The claimed fixed-parameter tractability with respect to~$n-\size$ however only holds for the problem of finding an~$s$-club~of size \emph{at least}~$\size$. The other fixed-parameter tractability results hold for both variants.}
This algorithm cannot be improved to $O((2-\epsilon)^{n-\size}\cdot n^{O(1)})$ for any~$\epsilon>0$ unless the strong exponential time hypothesis fails~\cite{HKN12}. Interestingly, \citet{CHLS11} proved that with respect to the number of vertices~$n$ the same search tree algorithm runs in $O(1.62^n)$ time.

The main observation behind the fixed-parameter algorithm for~$\size$ is that any closed neighborhood~$N[v]$ of a vertex~$v$ is an $s$-club for $s\ge 2$. Hence, the maximum degree~$\Delta$ in non-trivial instances is less than~$\size-1$. It also holds, however, that~$\size\le \Delta^s+1$ in yes-instances. Thus, for constant~$s$, fixed-parameter tractability with respect to~$\size$ also implies is fixed-parameter tractability with respect to the maximum degree of~$G$.
Moreover, \sclub does not admit a polynomial kernel with respect to~$\size$ (unless \NoKernelAssume)~\cite{SKMN11}. Interestingly, taking for each vertex the vertex itself together with all other vertices that are in distance at most~$s$ forms a so-called Turing-kernel with at most $k^2$-vertices for even~$s$ and at most $k^3$-vertices for odd~$s$~\cite{SKMN11}. In companion work~\cite{HKN12}, we considered different structural parameters: We presented a fixed-parameter algorithm for the parameter treewidth and polynomial kernels for the parameters (size of a)  \emph{feedback edge set} and the \emph{cluster editing number}. Additionally, we showed the non-existence of a polynomial kernel and that the simple search tree-based algorithm for the \emph{dual parameter} $n-\size$ is asymptotically optimal. Somewhat in contrast to this negative result, we showed that an implementation of the branching algorithm for the \emph{dual parameter} combined with the Turing-kernelization is among the best-performing algorithms on real-world  
and on synthetic instances.

\subsection{Structural Parameters}
\looseness=-1 We next define the structural parameters under consideration (see \autoref{fig:result-overview} for an illustration of their relations).
For a set of graphs~$\Pi$ (for instance, the set of bipartite graphs) the parameter \emph{distance to~$\Pi$} measures the number of vertices that have to be deleted in the input graph in order to obtain a graph that is isomorphic to one in~$\Pi$.
Denoting by~$P_t$ an induced path on~$t$ vertices, the set of $P_t$-free graphs consists of all graphs not containing any~$P_t$. The $P_4$-free graphs are called cographs, $P_3$-free graphs are so-called cluster graphs, and connected $P_3$-free graphs are cliques. A graph where each connected component is an $s$-club is called \emph{$s$-club cluster graph}. 
Observe that this is equivalent to requiring that all shortest paths do not contain any~$P_{s+2}$.
Additionally, deleting all the vertices on a~$P_t$ is a factor-$t$ approximation for the parameter \emph{distance to $P_t$-free graphs} (or even restricted to $P_t$'s on shortest paths) and, thus, we may assume that such a vertex deletion set is provided as an additional input for the corresponding algorithms.

A graph is a \emph{co-cluster graph} if 
its complement graph 
is a cluster graph.
The \emph{minimum clique cover} is the minimum number of cliques in a graph that are needed to cover all vertices, that is, each vertex is contained in at least one of these cliques. The \emph{domination number} of a graph is the minimum size of a dominating set, this is, a set such that each vertex is contained in it or has at least one neighbor in it.
A \emph{vertex cover} of~$G$ is a vertex set whose deletion transforms~$G$ in a graph without any edges. 
An \emph{independent set} is the complement of a vertex cover.
A set of edge insertions and deletions is a \emph{cluster editing} set if it transforms~$G$ into a cluster graph.
A set of edges is a \emph{feedback edge set} if its deletion results in a graph without any cycle.
A graph has \emph{$h$-index}~$k$, if~$k$ is the largest number such that the graph has at least $k$~vertices of degree at least~$k$. The \emph{degeneracy} of a graph is the smallest number~$d$ such that each subgraph has at least one vertex of degree at most~$d$.
The \emph{bandwidth} of a graph~$G=(V,E)$ is the minimum $k\in \mathbb{N}$ such that there is a function $f:V\rightarrow\mathbb{N}$ with $|f(v)-f(u)| \le k$ for all edges~$\{u,v\} \in E$.

\subsection{Our Contribution}\label{sec:contri}

We make progress towards a systematic classification of the complexity of \tclub with respect to structural graph parameters. 
\autoref{fig:result-overview} gives an overview of our results and their implications. 
\begin{figure}[t]%
\tikzstyle{para}=[rectangle, minimum height=.8cm,fill=gray!20,rounded corners=3mm]
\tikzstyle{fpt}=[rectangle, minimum height=.8cm,fill=white!70,rounded corners=3mm,draw]
\tikzstyle{fptF}=[rectangle, minimum height=.8cm,fill=white!20,rounded corners=3mm,draw]
\tikzstyle{NP}=[rectangle, minimum height=.8cm,fill=white!60,rounded corners=3mm,draw]
\tikzstyle{NPF}=[rectangle, minimum height=.8cm,fill=white!30,rounded corners=3mm,draw]
\tikzstyle{w1}=[rectangle, minimum height=.8cm,fill=white!60,rounded corners=3mm,draw]
\tikzstyle{w1F}=[rectangle, minimum height=.8cm,fill=white!30,rounded corners=3mm,draw]
\tikzstyle{our}=[draw] 
\newcommand{\tworows}[2]{\begin{tabular}{c}{#1}\\{#2}\end{tabular}}
\newcommand{\threerows}[3]{\begin{tabular}{r}{#1}\\{#2}\\{#3}\end{tabular}}
\newcommand{\distto}[1]{\tworows{Distance to}{#1}}
\def\dist{2mm}
\def\tres{10}
\resizebox{\textwidth}{!}{
	\begin{tikzpicture}[node distance=7mm]
		\node[fpt,our] (vc) {Vertex Cover \cite{HKN12}};
		\node[fpt,our] (ce) [right=of vc,xshift=-3mm] {Cluster Editing \cite{HKN12}};
		\node[fptF] (ml) [right=of ce,xshift=-3mm] {Max Leaf \#};
		\node[fptF] (dc) [left=of vc,fill=green!40,xshift=-1.2cm] {Distance to Clique};
 
		\node[fpt] (d2c) [below left=of dc,xshift=2.4cm,yshift=-0.2cm] {\distto{2-club \cite{SKMN11}}} edge[thick] (dc);
 		\node[NP,our] (mcc) [left=of d2c,xshift=2mm] {\tworows{Minimum}{Clique Cover}\FiveStar} edge[thick] (dc);
 		\node[fpt,our] (dcc) [below= of dc,xshift=1.9cm] {\distto{Co-Cluster}\FiveStar} edge[thick] (dc) edge[thick] (vc);
 		\node[fpt,our] (dcl) [below = of vc] {\distto{Cluster}\FiveStar} edge[thick] (dc) edge[thick] (vc) edge[thick] (ce);
 		\node[fptF] (ddp) [below=of ce,xshift=-2mm] {\distto{Disjoint Paths}} edge[thick] (vc) edge[thick] (ml);
 		\node[fpt,our] (fes) [below =of ml] {\tworows{Feedback}{Edge Set \cite{HKN12}}} edge[thick] (ml);
 		\node[fptF,our] (bw) [below right=of ml, xshift=2.2mm, yshift=-2mm] {Bandwidth} edge[thick] (ml);
 
  		\node[NPF] (is) [below=of mcc] {\tworows{Maximum}{Independent Set}} edge[thick] (mcc);
 		\node[fptF] (dcg) [below= of dcc] {\distto{Cograph \FiveStar}} edge[thick] (dcc) edge[thick] (dcl);
 		\node[para] (dig) [below = of dcl] {\distto{Interval}} edge[thick] (dcl) edge[thick] (ddp);
 		\node[fptF] (fvs) [below= of ddp] {\tworows{Feedback}{Vertex Set}} edge[thick] (ddp) edge[thick] (fes);
 		\node[fptF] (pw) [below = of fes] {Pathwidth} edge[thick] (ddp) edge[thick] (bw);

 		\node[NP,our] (ds) [below=of is] {\tworows{Domination}{Number}\FiveStar} edge[thick] (is);
		\node[para] (d2cc) [below=of d2c, yshift=-18mm] {\distto{2-club cluster}} edge[thick] (d2c) edge[thick] (dcg);

 		\node[NP, xshift=17mm,our] (dch) [below= of dcg] {\distto{Chordal} \cite{BBT05}} edge[thick] (dig);
		\node[NP, xshift=-15mm,our] (dbp) [below= of fvs] {\distto{Bipartite}\FiveStar} edge[thick] (fvs);1
 		\node[fpt] (mxd) [below= of bw] {\tworows{Maximum}{Degree~\cite{SKMN11}}} edge[thick] (bw);
		\node[NPF] (mxdia) [below=of d2cc] {Diameter} edge[thick] (ds) edge[thick] (d2cc);
		\node[NPF, xshift=-17mm] (dpf) [below= of dch] {\distto{Perfect}} edge[thick] (dch) edge[thick] (dcg) edge[thick] (dbp);
		\node[fpt] (tw) [below= of pw,our] {Treewidth \cite{Sch09}} edge[thick] (fvs) edge[thick] (pw);
 
		\node[w1,our] (hindex) [below=of mxd,yshift=2mm] {$h$-index\,\FiveStar} edge[thick] (mxd);
 		\node[w1F] (deg) [below=of hindex,yshift=-2mm] {Degeneracy \FiveStar} edge[thick] (tw) edge[thick] (hindex);

		\node[NPF] (cn) [below left=of deg,xshift=-5mm] {\tworows{Chromatic}{Number}} edge[thick] (deg) edge[thick] (dbp);
		\node[NP] (mnd) [below=of deg,our,xshift=-5mm] {\tworows{Average}{Degree}\FiveStar} edge[thick] (deg);

		\node [above=of ce,xshift=14mm,yshift=-7mm] {\bf FPT and polynomial-size kernels};
		\node [below=of mxdia,xshift=18mm] {\bf NP-hard with constant parameter values};
		\node [below=of hindex,xshift=-0.5mm,yshift=7mm] {\bf W[1]-hard};
		\node [left=of tw,xshift=1mm,yshift=-5mm] {\bf FPT, };
 		\node [below=of tw,xshift=-7mm,yshift=5mm] {\bf \threerows{but no polynomial-}{size kernel unless}{\NoKernelAssume}};
		\begin{pgfonlayer}{background}
			\draw[line width=0.5pt,rounded corners,top color=orange!15,bottom color=orange!15] 
				($(vc.north west) 				+ (-\dist, \dist+\tres)$) --
				($(vc.north east) 				+ (0.5*\dist, \dist+\tres)$) -- 
				($(ddp.north  -| vc.east) 		+ (0.5*\dist, \dist)$) -- 
				($(ddp.north east) 				+ ( 1.4*\dist, \dist)$) -- 
				($(pw.north  -| ddp.east) 		+ ( 1.4*\dist, \dist)$) --
				($(pw.north  -| bw.west) 		+ (-\dist, \dist)$) --
				($(bw.north  -| bw.west) 		+ (-\dist, \dist)$) --
				($(bw.north  -| deg.east) 		+ (0.5*\dist, \dist)$) --
				($(mxd.south  -| deg.east)		+ (0.5*\dist,-\dist)$) -- 
				($(mxd.south -| pw.east)		+ (2.0*\dist,-\dist)$) -- 
				($(deg.south -| pw.east)		+ (2.0*\dist,-\dist)$) -- 
				($(deg.south -| dbp.east)		+ (1.5*\dist,-\dist)$) -- 
				($(dcg.south -| dbp.east)		+ (1.5*\dist,-\dist)$) -- 
				($(dcg.south -| dig.east)		+ ( \dist,-\dist)$) -- 
				($(dig.north east)				+ ( \dist, \dist)$) -- 
				($(dig.north west)				+ (-\dist, \dist)$) -- 
				($(dcg.south -| dig.west)		+ (-\dist,-\dist)$) -- 
				($(dcg.south -| dcg.west) 		+ (-0.5*\dist,-\dist)$) --
				($(d2c.south west) 				+ (-0.5*\dist,-1*\dist)$) --
				($(mcc.north -| d2c.west)		+ (-0.5*\dist, \dist)$) --
				($(mcc.north  -| vc.west) 		+ (-\dist, \dist)$) -- cycle;
			\draw[line width=0.5pt,rounded corners,top color=green!15,bottom color=green!15] 
				($(ce.north west) 				+ (-1*\dist, \dist+\tres)$) --
				($(ce.north  -| ml.east) 		+ ( 2*\dist, \dist+\tres)$) -- 
				($(fes.south  -| ml.east) 		+ ( 2*\dist,-2*\dist)$) --
				($(fes.south  -| ml.west) 		+ (-2.2*\dist,-2*\dist)$) --
				($(ce.south  -| ml.west) 		+ (-2.2*\dist,-2*\dist)$) -- 
				($(ce.south west) 				+ (-1*\dist,-2*\dist)$) -- cycle;
			\draw[line width=0.5pt,rounded corners,fill=red!30] 
				($(mcc.north  -| ds.west) 		+ (-1.2*\dist, \dist)$) -- 
				($(mcc.north  -| ds.east) 		+ ( 1.2*\dist, \dist)$) -- 
				($(d2cc.south  -| ds.east) 		+ ( 1.2*\dist,-\dist)$) -- 
				($(d2cc.south east) 					+ ( \dist,-\dist)$) -- 
				($(dbp.north  -| d2cc.east) 		+ ( \dist, \dist)$) -- 
				($(dbp.north  -| dbp.east) 		+ ( \dist, \dist)$) -- 
				($(cn.north  -| dbp.east) 		+ ( \dist, \dist)$) -- 
				($(cn.north  -| deg.west) 		+ (-0.5*\dist, \dist)$) -- 
				($(deg.north west) 				+ (-0.5*\dist, \dist)$) -- 
				($(deg.north east) 				+ (0.5*\dist, \dist)$) -- 
				($(mnd.south  -| deg.east)		+ (0.5*\dist,-\dist)$) -- 
				($(mnd.south -| ds.west) 		+ (-1.2*\dist,-\dist)$) -- cycle;
			\draw[line width=0.5pt,rounded corners,fill=magenta!30] 
				($(hindex.north  -| deg.west) 	+ (-0.5*\dist, \dist)$) -- 
				($(hindex.north  -| deg.east) 	+ (0.5*\dist, \dist)$) -- 
				($(hindex.south  -| deg.east) 	+ (0.5*\dist, -3*\dist)$) -- 
				($(hindex.south  -| deg.west) 	+ (-0.5*\dist, -3*\dist)$) -- cycle;
		\end{pgfonlayer}
	\end{tikzpicture}
}%
\caption{
Overview of the relation between structural graph parameters (see \autoref{sec:contri}) and of our results\,\FiveStar{} for \tclub. An edge from a parameter~$\alpha$ to a parameter~$\beta$ below of $\alpha$ means that~$\beta$ can be upper-bounded in a polynomial (usually linear) function in~$\alpha$. 
The box containing the parameter (size of a) \emph{vertex cover} on top, consists of all parameters for which \tclub becomes fixed-parameter tractable but does not admit a polynomial kernel~\cite{HKN12}. Therein, for all parameters the best performing algorithms run in $2^{O(2^k)}\cdot n^{O(1)}$ time with the only exception \emph{distance to 2-club} admitting a $2^{k}\cdot n^{O(1)}$-time algorithm~\cite{SKMN11}.
The box consisting of \emph{cluster editing}, \emph{max leaf \#}, and \emph{feedback edge set} contains all parameters admiting a single-exponential time algorithm and a polynomial kernel~\cite{HKN12}. 
The box at the bottom contains those parameters where \tclub remains NP-hard even for constant values.
It is open whether \tclub{} is fixed-parameter tractable when it is parameterized by \emph{distance to interval} or \emph{distance to 2-club cluster} and whether it admits a polynomial kernel when parameterized by \emph{distance to clique}.}
\label{fig:result-overview}
\vspace{-3mm}
\end{figure}
In \autoref{sec:clique-cover}, we consider the graph parameters
\emph{minimum clique cover number}, \emph{domination number}, and
some related graph parameters. We show that \tclub is NP-hard even if
the \emph{minimum clique cover number} of~$G$ is three. In contrast, we show
that if the \emph{minimum clique cover number} is two, then \tclub is
polynomial-time solvable. Then, we show that \tclub is NP-hard even
if~$G$ has a dominating set of size two.
This result is tight in the sense that \tclub is trivially solvable in case~$G$ has a dominating set of size one.
In \autoref{sec:dist-bipartite}, we study the parameter \emph{distance to bipartite graphs}. We show that \tclub is NP-hard even if the input graph can be transformed into a bipartite graph by deleting only one vertex. This is somewhat surprising since \tclub is polynomial-time solvable on bipartite graphs~\cite{Sch09}.
Then, in \autoref{sec:h-index}, we consider the graph
parameter~\emph{$h$-index}. The study of this parameter is motivated by
the fact that the~$h$-index is usually small in social
networks (see \autoref{sec:h-index} for a more detailed
discussion). On the positive side, we show that \tclub is polynomial-time solvable for constant~$k$. On the negative side, we show that
\tclub parameterized by the \emph{$h$-index}~$k$ of the input graph is
W[1]-hard. Hence, a running time of~$f(k)\cdot
n^{O(1)}$ is probably not achievable. Even worse, we prove that \tclub becomes NP-hard even for constant degeneracy. Note that degeneracy is provably at most as large as the $h$-index of a graph.

Finally, in \autoref{sec:fpt-dist-to-cocluster} we  describe a fixed-parameter algorithm for the parameter \emph{distance to cographs} and show that it can be slightly improved for the weaker parameter \emph{distance to cluster graphs}. 
Interestingly, these are rare examples for structural graph parameters, that are unrelated to treewidth and still admit a fixed-parameter algorithm (see \autoref{fig:result-overview}). Notably, the fixed-parameter algorithm for treewidth and those for \emph{distance to cograph} both have the same running time characteristic, that is, $2^{O(2^k)}\cdot n^{O(1)}$ and this is, so far, also the best for the  much ``weaker'' parameter \emph{vertex cover}.

For the sake of completeness, we would like to mention that for the parameters \emph{bandwidth} and \emph{maximum degree}, 
taking the disjoint union of the input graphs is a composition algorithm that proves the non-existence of polynomial kernels~\cite{BDFH09JCSS}, under the standard assumption that \NoKernelAssume{} does not hold.

\subsection{Preliminaries}\label{sec:prelim}
We only consider undirected and simple graphs~$G=(V,E)$ where~$n:=|V|$ and~$m:=|E|$. For a vertex set~$S\subseteq V$, let~$G[S]$ denote the \emph{subgraph induced by~$S$} and~$G-S:=G[V\setminus S]$. We use~$\dist_G(u,v)$ to denote the \emph{distance between~$u$ and~$v$} in~$G$, that is, the length of a shortest path between~$u$ and~$v$. For a vertex~$v\in V$ and an integer~$t\ge 1$, denote by~$N^G_t(v):=\{u\in V\setminus \{v\}\mid \dist_G (u,v)\le t\}$ the set of vertices within distance at most~$t$ to~$v$. Moreover, we set~$N^G_t[v]:=N^G_t(v)\cup \{v\}$, $N^G[v]:=N^G_1[v]$ and $N^G(v):=N_1(v)$.   If the graph is clear from the context, we omit the superscript~$G$. Two vertices~$v$ and~$w$ are \emph{twins} if $N(v)\setminus\{w\}=N(w)\setminus \{v\}$ and they are \emph{twins with respect to a vertex set~$X$} if $N(v)\cap X=N(w)\cap X$. The twin relation is an equivalence relation; the corresponding equivalence classes are called \emph{twin classes}. The following observation is easy to see and it shows that 
either none or all vertices of a twin class are contained in a maximum-size $s$-club.
\begin{obs} \label{obs:twin-containment}
	Let~$S$ be an $s$-club in a graph~$G = (V,E)$ and let~$u,v \in V$ be twins. 
	If~$u\in S$ and $|S|>1$, then $S\cup \{v\}$ is also an $s$-club in~$G$.
\end{obs}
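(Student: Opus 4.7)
The plan is to verify the diameter-at-most-$s$ condition pair by pair for vertices of $S \cup \{v\}$. I would first dispose of the trivial case $v \in S$ (where the statement is tautological), and henceforth assume $v \notin S$. Pairs of vertices already in $S$ retain their $G[S]$-witness path, so their distance in $G[S \cup \{v\}]$ cannot increase. All the real work therefore concerns bounding $\dist_{G[S \cup \{v\}]}(v, y)$ for each $y \in S$.

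For $y \in S \setminus \{u\}$ I would use the standard twin-swap trick. The idea is to take a shortest $u$-$y$ path $u = w_0, w_1, \ldots, w_k = y$ inside $G[S]$, which has length $k \leq s$. The second vertex $w_1$ lies in $S$, hence $w_1 \neq v$, and the defining identity $N(u) \setminus \{v\} = N(v) \setminus \{u\}$ then forces $w_1 \in N(v)$. Replacing $u$ by $v$ yields the walk $v, w_1, \ldots, w_k$, a path of length at most $s$ inside $G[S \cup \{v\}]$, so $\dist_{G[S \cup \{v\}]}(v, y) \leq s$.

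The remaining case $y = u$ needs a slightly different argument, because there is no positive-length $u$-$u$ path to reroute. Here I would exploit $|S| > 1$: since $G[S]$ has finite diameter it is connected, so $u$ has at least one neighbour $w \in S$. Again $w \neq v$, so the twin identity gives $w \in N(v)$, and $v$-$w$-$u$ is a path of length $2 \leq s$ in $G[S \cup \{v\}]$ (implicitly using $s \geq 2$, the regime the paper operates in). The only subtlety worth flagging is this fallback step: without the hypothesis $|S| > 1$ the intermediate vertex $w$ need not exist, which is precisely why that assumption appears in the statement.
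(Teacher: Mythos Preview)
Your proof is correct. The paper itself does not give a proof of this observation; it is simply asserted as ``easy to see'' in the preliminaries. Your argument fills in the details cleanly, including the one nontrivial point: the case $y=u$ genuinely needs the hypothesis $|S|>1$ (to produce the intermediate vertex $w$) and implicitly needs $s\ge 2$, since non-adjacent twins would break the statement for $s=1$. Both caveats are consistent with the paper's setting.
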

%
We briefly recall the relevant notions from parameterized
complexity (see \cite{DF99,FG06,Nie06}). A problem is \emph{fixed-parameter tractable} (FPT) with
respect to a parameter~$k$ if there is a computable function~$f$ such that any instance~$(I,k)$ can be solved in
$f(k)\cdot |I|^{O(1)}$ time.  A problem is contained in XP if it can be solved in~$O(|I|^{f(k)})$ time for some computable function~$f$. A \emph{kernelization algorithm} reduces any instance $(I,k)$ in polynomial time to an equivalent instance~$(I',k')$ with $|I'|, k'\le g(k)$ for some computable~$g$. The instance $(I',k')$ is called \emph{kernel} of size~$g$ and in the special case of~$g$ being a polynomial it is a polynomial kernel.

A problem that is shown to be \emph{W[1]-hard} by means of a \emph{parameterized reduction} from a W[1]-hard problem is not fixed-parameter tractable, unless $\text{FPT}=\text{W[1]}$. A \emph{parameterized reduction} maps an instance $(I,k)$ in $f(k)\cdot |I|^{O(1)}$ time for some function~$f$ 
to an equivalent instance $(I',k')$ with $k'\le g(k)$ for some functions~$f$ and~$g$.


\section{Clique Cover Number and Domination Number}\label{sec:clique-cover}
In this section, we prove that on graphs of \emph{diameter} at most three, \tclub is NP-hard even if either the \emph{minimum clique cover number} is three or the \emph{domination number} is two. We first show that these bounds are tight.
The size of a maximum independent set is at most the size of a \emph{minimum clique cover}. 
Moreover, since each maximal independent set is a dominating set, the \emph{domination number} is also at most the size of a \emph{minimum clique cover}.

\begin{lemma}\label{lem:NP-hard-ind-two}
	For $s\ge 2$, \sclub is polynomial-time solvable on graphs where the size of a maximum independent set is at most two.
\end{lemma}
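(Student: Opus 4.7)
The plan is to exploit two strong structural consequences of $\alpha(G) \leq 2$ (where $\alpha(G)$ denotes the size of a maximum independent set of $G$). First, for any two non-adjacent vertices $u, v \in V$, the pair $\{u, v\}$ is a maximal independent set, so every other vertex of $V$ is adjacent to $u$ or $v$. Consequently, a disconnected graph with $\alpha(G) \leq 2$ has at most two components (otherwise three vertices from distinct components form an independent set of size $3$), and each component must be a clique (otherwise a non-edge inside one component together with any vertex of the other component yields an independent set of size $3$); hence $G$ is the disjoint union of at most two cliques and the maximum $s$-club is the larger of them. Second, for a connected $G$, along any shortest path the vertices at even-indexed positions are pairwise non-adjacent, so $\alpha(G) \leq 2$ forces $\mathrm{diam}(G) \leq 3$.

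Using these, the easy cases are immediate: if $G$ is disconnected, output the larger clique; if $G$ is connected and either $s \geq 3$ or $\mathrm{diam}(G) \leq 2$, then $V$ itself is an $s$-club of maximum size. The only remaining case is $s = 2$, $G$ connected, and $\mathrm{diam}(G) = 3$.

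For this last case, fix any pair $u^*, v^*$ with $\dist_G(u^*, v^*) = 3$ and set $A := N(u^*)$, $B := N(v^*)$. Since $\dist_G(u^*, v^*) = 3$, $A \cap B = \emptyset$, and by the maximality argument applied to $\{u^*, v^*\}$ we get $V = \{u^*, v^*\} \cup A \cup B$. The set $A$ is a clique: a non-edge $a_1 a_2$ inside $A$ together with $v^* \notin N(a_1) \cup N(a_2)$ (since $A \cap B = \emptyset$) would form an independent set of size $3$; symmetrically, $B$ is a clique. As $\dist_G(u^*, v^*) = 3 > 2$, no $2$-club contains both $u^*$ and $v^*$, so by symmetry it suffices to maximize $2$-clubs contained in $\{v^*\} \cup A \cup B$ and in $\{u^*\} \cup A \cup B$. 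Define $A_1 := \{a \in A : N(a) \cap B \neq \emptyset\}$ and $B_1$ analogously. The crucial claims are: (a) $\{v^*\} \cup A_1 \cup B$ is a $2$-club, since for any $a \in A_1$ a chosen $b_a \in N(a) \cap B$ lies in $N(a) \cap N(v^*)$ and, because $B$ is a clique, also in $N(a) \cap N(b)$ for every other $b \in B$, so $b_a$ certifies distance $\leq 2$ from $a$ to every vertex of the candidate set; and (b) any $2$-club $S$ with $v^* \in S \subseteq \{v^*\} \cup A \cup B$ satisfies $S \cap A \subseteq A_1$, because every $a \in S \cap A$ is non-adjacent to $v^*$ and a required common neighbor in $S$ must lie in $N(v^*) \cap N(a) \subseteq B$. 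The symmetric analysis handles the case $u^* \in S$, and $2$-clubs excluding both $u^*, v^*$ are dominated by the above. Hence the maximum $2$-club has size $1 + \max(|A_1| + |B|, |B_1| + |A|)$, which is computable in polynomial time.

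The main obstacle is the diameter-$3$ case: the key is spotting that $A$ and $B$ are cliques, which collapses the otherwise intricate distance-$\leq 2$ constraints on $S \cap A$ and $S \cap B$ into the clean neighborhood conditions $S \cap A \subseteq A_1$ and $S \cap B \subseteq B_1$. The remaining cases reduce directly to the diameter bound $\mathrm{diam}(G) \leq 3$ that already gives $V$ as an $s$-club.
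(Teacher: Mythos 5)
Your proof is correct, but it takes a genuinely different route from the paper's. The paper gives one uniform algorithm for all $s\ge 2$: it iterates over \emph{all} vertex pairs $(v,u)$, deletes $N(v)\cap N[u]$ to obtain $G'$, and outputs the largest set $N^{G'}[v]\cup\bigl(N^{G'}(u)\cap N^{G'}_s(v)\bigr)$; correctness comes from guessing, for a fixed maximum $s$-club $S$, a pair with $v\in S$ and $\dist_{G[S\cup\{u\}]}(v,u)>s$, whereupon $N^{G'}[v]$ and $N^{G'}[u]$ are cliques (a non-edge plus the opposite vertex would be an independent triple) and $\{v,u\}$ dominates. You instead exploit the structure up front: a disconnected graph is two cliques, connectivity forces diameter at most three, and this settles every case except $s=2$ with diameter exactly three, where a \emph{single} fixed distance-$3$ pair $(u^*,v^*)$ suffices and yields the closed formula $1+\max(|A_1|+|B|,\,|B_1|+|A|)$ --- which in fact coincides with the paper's candidate set for the choice $(v,u)=(v^*,u^*)$, since then $N(v^*)\cap N[u^*]=\emptyset$ and $N(u^*)\cap N_2(v^*)=A_1$. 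Your version buys a simpler algorithm (no iteration over pairs, and the $s\ge 3$ case becomes trivial via the diameter bound), while the paper's buys uniformity across all $s$ and avoids your case analysis. One step you should write out rather than assert: that 2-clubs avoiding both $u^*$ and $v^*$ are ``dominated'' does not follow from $|S|\le |A|+|B|$ alone; it needs the observation that for $a\in A\setminus A_1$ and $b\in B\setminus B_1$ we have $N(a)\subseteq\{u^*\}\cup A$ and $N(b)\subseteq\{v^*\}\cup B$, so $a$ and $b$ are non-adjacent with no common neighbor, hence $\dist_G(a,b)\ge 3$ and no 2-club contains both --- therefore $S\subseteq A_1\cup B$ or $S\subseteq A\cup B_1$, each strictly smaller than the corresponding candidate. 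This is a one-line fix using exactly the common-neighbor technique already in your claim~(b), not a flaw in the approach.
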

{

\begin{proof}  
   Let~$G=(V,E)$ be a graph. If a maximum independent set in~$G$ has size one or~$G$ has diameter~$s$, then~$V$ is an \mbox{$s$-club}. Otherwise 
   iterate over all possibilities to choose two vertices~$v,u\in V$. Denoting by~$G'$ the graph that results from deleting $N(v)\cap N[u]$ in~$G$, output a maximum size set $N^{G'}[v]\cup (N^{G'}(u)\cap N^{G'}_s(v))$ among all iterations.

  We next prove the correctness of the above algorithm. For a maximum size $s$-club $S \subseteq V$ in~$G$, there are two vertices~$v,u\in V$ such that $v\in S$ and~$\dist_{G[S\cup \{u\}]}(v,u)>s$, implying that $N(v)\cap N[u]\cap S=\emptyset$. Moreover, $N^{G'}[v]$ and $N^{G'}[u]$ are cliques: Two non-adjacent vertices in~$N^{G'}(v)$ (in~$N^{G'}(u)$) would form together with~$u$ (with~$v$) an independent set. 
  Since $N^{G'}[v]$ is a clique and~$v\in S$, $G[S\cup N^{G'}(v)]$ is a $s$-club and thus $N^{G'}[v]\subseteq S$ by the maximality of~$S$. Moreover, since $\{v,u\}$ is a maximum independent set and thus also a dominating set it remains to specify $N^{G'}(u)\cap S$.   
  However, since $N^{G'}[u]$ and $N^{G'}[v]$ are cliques and each vertex in $N^{G'}(u)\cap S$ needs to have distance at most~$s$ to~$v$, each vertex from $N^{G'}(u)\cap N_s(v)$ is contained in~$S$, implying that $S=N^{G'}[v]\cup (N^{G'}(u)\cap N^{G'}_s(v))$.
\end{proof}%
%
The following theorem shows that 
the bound on the maximum independent set size in~\autoref{lem:NP-hard-ind-two} is tight.
\begin{theorem}\label{thm:NP-hard-independent-set}
	\tclub is NP-hard on graphs with \emph{clique cover number} three and \emph{diameter} three.
\end{theorem}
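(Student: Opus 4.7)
The plan is to exhibit a polynomial-time reduction from \Clique to \tclub whose output graph has clique cover number~$3$ and diameter~$3$. Given an instance $(G=(V,E),k)$ of \Clique, I would build a graph $H$ whose vertex set is covered by three cliques $C_1,C_2,C_3$. The cliques encode the vertex and edge sets of $G$, with cross-clique edges reflecting $G$'s incidence and adjacency, and a target size $\ell=f(k)$ is chosen so that $k$-cliques in $G$ correspond to 2-clubs in $H$ of size at least~$\ell$.

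The reduction I have in mind adapts the classical BBT-style construction~\cite{BBT05}, which maps a $k$-clique $K\subseteq V$ to a 2-club $K\cup E(K)$ of size $k+\binom{k}{2}$ in a graph with vertex set $V\cup E$. That construction has clique cover number roughly $|V|+1$, so to force clique cover~$3$ I would collapse the $V$-side and the $E$-side into three cliques by adding auxiliary structure---for example, partitioning $V$ into two parts $V_1,V_2$ and adding all missing edges to make each part a clique, then adjoining $E$ as the third clique---while carefully controlling the cross-edges so that the original $k$-clique~/~2-club correspondence survives. In this setup the forward direction is essentially the BBT verification: a $k$-clique $K$ in $G$ gives the 2-club $K\cup E(K)$ in $H$, where within-clique distances are trivially~$1$ and across-clique pairs (say $v\in V\cap S$ and $e\in E\cap S$ with $v\notin e$) are connected via an endpoint of $e$ that lies in $K\cap S$.

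Diameter exactly~$3$ (as opposed to $\le 2$, which would trivialise the problem by making all of $V_H$ a 2-club) must be verified by exhibiting some pair of vertices in different cliques with no common neighbor in $H$ while all other pairs have distance at most~$3$. Arranging the cross-clique edges to achieve this diameter lower bound without breaking the correctness of the reduction is a central design constraint, which I would enforce by choosing the $V_1,V_2$ partition and the incidences between $V_i$ and $E$ so that a carefully chosen pair has no length-$2$ path between them.

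The main obstacle is the tension between the three conditions---clique cover exactly~$3$, diameter exactly~$3$, and correctness---since each $C_i$ is itself a 2-club of size $|C_i|$ and hence the threshold $\ell$ must exceed $\max_i|C_i|$. Similarly, the union $C_i\cup C_j$ may already form a 2-club via common neighbors in the third clique, and such configurations must be controlled. Resolving this requires a careful balance of clique sizes, cross-edges, and target $\ell$, together with a case analysis in the backward direction: given a 2-club $S$ of size at least $\ell$ in $H$, one splits $S$ according to its intersection with each~$C_i$, uses the 2-club distance constraints to show that the projections onto $V(G)$ are pairwise adjacent in $G$, and thereby recovers a $k$-clique of~$G$. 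I expect this backward case analysis, which must rule out all non-clique-induced ways of reaching size~$\ell$, to be the technically hardest step of the proof.
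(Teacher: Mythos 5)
There is a genuine gap, and it lies exactly where you located the difficulty. Your concrete instantiation---take a \emph{single} copy of $V$, partition it into $V_1\cup V_2$, and complete each part into a clique---cannot be repaired by any choice of cross-edges or threshold~$\ell$. Completing $V_1$ into a clique erases all adjacency information of $G$ inside $V_1$: two vertices $u,u'\in V_1$ are at distance one in $H$ whether or not $\{u,u'\}\in E$, so the 2-club condition tests adjacency of $G$ only \emph{across} the parts, never within them. Concretely, the set $V_1\cup\{e\in E : e\subseteq V_1\}$ is always a 2-club in your $H$ (pairs inside $V_1$ are adjacent, edge-vertices are pairwise adjacent inside the $E$-clique, and a vertex $u\in V_1$ reaches an edge-vertex $e$ with endpoint $w\in V_1$ through the common neighbor~$w$), and its size is $\Theta(|V_1|+|E(V_1)|)$ independently of whether $G$ contains a $k$-clique. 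So every threshold $\ell$ is either swamped by such ``freeloader'' 2-clubs or unreachable even for yes-instances; flagging that such configurations ``must be controlled'' names the problem but supplies no mechanism, and this is precisely the step at which the backward direction of your reduction has no leverage.

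The paper's construction resolves this with two ideas you are missing. First, it uses \emph{two disjoint full copies} of $V$: both $V_1$ and $V_2$ contain a copy of every vertex of $G$, each completed into a clique, and for each edge $\{u,v\}\in E$ two edge-vertices $e_{uv},e_{vu}$ join $u_1$ to $v_2$ and $v_1$ to $u_2$. The adjacency test thus happens only between a $V_1$-copy and a $V_2$-copy of two \emph{different} vertices, where the clique-completion edges cannot interfere: $v_1^i,v_2^j\in S'$ forces the common neighbor $e_{v^iv^j}$ into $S'$, hence $\{v^i,v^j\}\in E$. Second, it adds size-dominating padding that makes the counting work: twin sets $V_1^{\mathrm{big}},V_2^{\mathrm{big}}$ of size $n^5$ each, which by \autoref{obs:twin-containment} lie entirely inside any 2-club of size $k'=2n^5+kn^3+2k+2\binom{k}{2}$, and per-vertex sets $V_E^v$ of size $n^3$ whose only neighbors outside the $V_E$-clique are $v_1$ and $v_2$, so that a big vertex on either side reaches $V_E^v$ only through $v_1$ resp.\ $v_2$; this forces, for at least $k$ vertices $v$, \emph{both} copies $v_1,v_2$ into $S'$, which is what activates the cross-copy adjacency test. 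Without the two-copies encoding and this padding, the 2-club distance constraints simply never translate into clique constraints on~$G$, so the reduction as you sketch it fails in the backward direction.
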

\begin{proof}
	We describe a reduction from \Clique.
	Let~$(G=(V,E),k)$ be a \Clique instance.
	We construct a graph~$G'=(V',E')$ consisting of three disjoint vertex sets, that is,~$V' = V_1 \cup V_2 \cup V_E$.
	Further, for $i \in \{1,2\}$, let~$V_i = V^V_i \cup V^{\rm{big}}_i$, where~$V^V_i$ is a copy of~$V$ and~$V^{\rm{big}}_i$ is a set of~$n^5$ vertices.
	Let~$u,v \in V$ be two adjacent vertices in~$G$ and let~$u_1,v_1 \in V_1$, $u_2,v_2 \in V_2$ be the copies of~$u$ and~$v$ in~$G'$.
	Then add the vertices~$e_{uv}$ and~$e_{vu}$ to~$V_E$ and add the edges~$\{v_1, e_{vu}\}, \{e_{vu}, u_2\}, \{u_1, e_{uv}\}, \{e_{uv}, v_2\}$ to~$G'$.
	Furthermore, add for each vertex~$v \in V$ the vertex set~$V^v_E = \{e_v^1, e_v^2, \ldots, e_v^{n^3}\}$ to~$V_E$ and make~$v_1$ and~$v_2$ adjacent to all these new vertices.
	Finally, make the following vertex sets to cliques:~$V_1$, $V_2$, $V_E$, and~$V_1^{\rm{big}} \cup V_2^{\rm{big}}$. Observe that~$G'$ has diameter three and that it has a \emph{clique cover number} of three.

	We now prove that~$G$ has a clique of size~$k$ $\Leftrightarrow$~$G'$ has a 2-club of size~$k' = 2n^5+kn^3+2k+2\binom{k}{2}$.

	``$\Rightarrow$:'' Let~$S$ be a clique of size~$k$ in~$G$.
	Let~$S_c$ contain all the copies of the vertices of~$S$.
	Furthermore, let~$S_E := \{e_{uv} \mid u_1 \in S_c \wedge v_2 \in S_c\}$ and $S_b := \{e_v^i \mid v \in S \wedge 1 \leq i \leq n^3\}$.
	We now show that~$S' := S_c \cup S_E \cup S_b \cup V_1^{\rm{big}} \cup V_2^{\rm{big}}$ is a 2-club of size~$k'$.
	First, observe that~$|V_1^{\rm{big}} \cup V_2^{\rm{big}}| = 2n^5$ and~$|S_c| = 2k$.
	Hence, $|S_b| = kn^3$ and~$|S_E| = 2\binom{k}{2}$.
	Thus, $S'$ has the desired size.
	With a straightforward case distinction one can check that~$S'$ is indeed a 2-club.

	``$\Leftarrow$:'' Let~$S'$ be a 2-club of size~$k'$.
	Observe that~$G'$ consists of~$|V'| = 2n^5 + 2n + 2m + n^4$ vertices.
	Since~$k' > 2n^5$ at least one vertex of~$V_1^{\rm{big}}$ and of~$V_2^{\rm{big}}$ is in~$S$.
	Since all vertices in~$V_1^{\rm{big}}$ and in~$V_2^{\rm{big}}$ are twins, we can assume by \autoref{obs:twin-containment} that all vertices of~$V_1^{\rm{big}} \cup V_2^{\rm{big}}$ are contained in~$S'$.
	Analogously, it follows that at least~$k$ sets~$V^{v^1}_E, V^{v^2}_E, V^{v^3}_E, \ldots, V^{v^k}_E$ are completely contained in~$S'$.
	Since~$S'$ is a 2-club, the distance from vertices in~$V_i^{\rm{big}}$ to vertices in~$V^{v^j}_E$ is at most two.
	Hence, for each set~$V^{v^j}_E$ in~$S'$ the two neighbors~$v_1^j$ and~$v_2^j$ of vertices in~$V^{v^j}_E$ are also contained in~$S'$.
	Since the distance of~$v_1^i$ and~$v_2^j$ for~$v_1^i,v_2^j \in S'$ is also at most two, the vertices~$e_{v^iv^j}$ and~$e_{v^jv^i}$ are part of~$S'$ as well. Consequently,~$v^i$ and~$v^j$ are adjacent in~$G$.
	Therefore, the vertices~$v^1, \ldots, v^k$ form a size-$k$ clique in~$G$.
\end{proof}
Since a maximum independent set is also a dominating set, 
\autoref{thm:NP-hard-independent-set} implies that \tclub is NP-hard on graphs with \emph{domination number} three and \emph{diameter} three. In contrast, for \emph{domination number} one \tclub is trivial. The following theorem shows that this cannot be extended.
\begin{theorem}\label{thm:NP-hard-dom}
 	\tclub is NP-hard even on graphs with \emph{domination number} two and \emph{diameter} three.
\end{theorem}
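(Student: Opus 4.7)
\emph{Plan.} The plan is to adapt the reduction from \Clique{} used in the proof of~\autoref{thm:NP-hard-independent-set}: starting from an instance $(G=(V,E),k)$, take the graph $G'$ constructed there and augment it with a single new vertex $d$ that is made adjacent to every vertex of $V_E \cup V_2^V$. The intended dominating set of size two is $\{d, p\}$ for an arbitrary $p \in V_1^{\rm{big}}$: the vertex $p$ dominates $V_1 \cup V_2^{\rm{big}}$ via the cliques $V_1$ and $V_1^{\rm{big}} \cup V_2^{\rm{big}}$, while $d$ by construction dominates $V_E \cup V_2^V \cup \{d\}$, so together they cover $V(G')$.

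Next I would verify the two graph parameters of the modified~$G'$. A case check over the vertex types shows that no single vertex dominates $V(G')$ (for example $d$ itself misses $V_1$), so the domination number is exactly two. For the diameter, $d$ is non-adjacent to all of $V_1$, and its neighborhood $V_E \cup V_2^V$ is disjoint from the neighborhood $V_1 \cup V_2^{\rm{big}}$ of any $V_1^{\rm{big}}$-vertex, so the distance from $d$ to $V_1^{\rm{big}}$ is exactly three. The critical distance-three pair $(v_1,u_2)$ with $\{u,v\} \notin E(G)$ used in the proof of~\autoref{thm:NP-hard-independent-set} remains at distance three, since $d \not\sim v_1$ provides no new common neighbor; all other pairs still fit within distance three as in the original construction.

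For the reduction itself, the forward direction is unchanged: the 2-club built from a $k$-clique of $G$ in~\autoref{thm:NP-hard-independent-set} does not use $d$, hence it remains a 2-club of the same size $k'$ in the modified graph. For the backward direction, given a 2-club $S$ with $|S|\geq k'$, a cardinality argument together with~\autoref{obs:twin-containment} applied to the twin classes $V_1^{\rm{big}}$ and $V_2^{\rm{big}}$ forces $V_1^{\rm{big}} \cup V_2^{\rm{big}} \subseteq S$. The distance-three gap between $d$ and $V_1^{\rm{big}}$ then forces $d \notin S$, and with $d$ eliminated the remaining argument from~\autoref{thm:NP-hard-independent-set} applies verbatim to extract a $k$-clique in $G$.

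The main obstacle is preventing $d$ from introducing ``spurious'' 2-clubs unrelated to any clique in $G$; a hub with broad adjacencies on both sides of the construction would collapse distances and create large 2-clubs independent of $G$ (for instance $V(G') \setminus V_2^V$ easily becomes a 2-club under such modifications). The key design choice is to leave $d$ non-adjacent to all of $V_1$, so that $d$ is at distance three from $V_1^{\rm{big}}$; since every 2-club of size at least $k'$ must contain $V_1^{\rm{big}}$, this guarantees that $d$ is excluded and the reduction inherits its correctness from~\autoref{thm:NP-hard-independent-set}.
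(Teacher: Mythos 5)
Your proof is correct, but it takes a different route than the paper. Your key steps all check out: $\{d,p\}$ with $p\in V_1^{\mathrm{big}}$ dominates, since $p$ covers $V_1\cup V_2^{\mathrm{big}}$ through the cliques $V_1$ and $V_1^{\mathrm{big}}\cup V_2^{\mathrm{big}}$ while $d$ covers $V_E\cup V_2^V\cup\{d\}$; since $N(d)=V_E\cup V_2^V$ is disjoint from the neighborhood $V_1\cup V_2^{\mathrm{big}}$ of $p$, the pair $(d,p)$ is at distance exactly three, so any 2-club of size $k'>|V(G')|-|V_1^{\mathrm{big}}|$ must meet $V_1^{\mathrm{big}}$ and therefore exclude $d$, after which both directions do reduce to the argument of \autoref{thm:NP-hard-independent-set} (the twin classes $V_1^{\mathrm{big}}$, $V_2^{\mathrm{big}}$, and each $V_E^v$ are untouched by $d$, and $d$ creates no new common neighbor for the critical pairs $(v_1,u_2)$ and $(p,e_v^i)$, being adjacent to neither $V_1$ nor the big cliques). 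The paper, however, does not recycle the \autoref{thm:NP-hard-independent-set} gadget: it gives a small standalone reduction from \Clique{} in which $V$ is copied as an independent set, one edge-vertex $e_{\{u,v\}}$ is added per edge, $C\cup V_E$ is made a clique for a padding set $C$ of size $n+2$, and a hub $v^*$ is made adjacent to all of $V$; the dominating set is $v^*$ plus any vertex of $V_E\cup C$, and the target 2-club is $K\cup C\cup V_E$ of size $|C|+|V_E|+k$. Conceptually the two proofs share the same central trick --- a hub that belongs to the dominating set but is forced out of every large 2-club by a distance-three pair (the paper's $v^*$ versus $C$ plays exactly the role of your $d$ versus $V_1^{\mathrm{big}}$). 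What the paper's version buys is economy: a construction with $O(n+m)$ vertices and a few-line self-contained correctness argument, with no twin-class machinery or $n^5$-size padding. What your version buys is that correctness is inherited essentially verbatim from \autoref{thm:NP-hard-independent-set}, and, as a small bonus, your modified graph still has clique cover number three ($V_1$, $V_2$, and $V_E\cup\{d\}$ are cliques), so a single construction simultaneously witnesses both hardness results, whereas the paper's domination-number construction contains the independent set $V$ and thus has clique cover number at least $n$.
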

\begin{proof}

  We present a reduction from \textsc{Clique}.  Let $(G=(V,E),k)$ be
  a~\textsc{Clique} instance and assume that~$G$ does not contain isolated vertices.  We construct the graph~$G'$ as
  follows. First copy all vertices of~$V$ into~$G'$. In~$G'$ the
  vertex set~$V$ will form an independent set.  Now, for each edge
  $\{u,v\}\in E$ add an \emph{edge-vertex} $e_{\{u,v\}}$ to~$G'$
  and make~$e_{\{u,v\}}$ adjacent to~$u$ and~$v$. Let~$V_E$ denote the
  set of edge-vertices. Next, add a vertex set~$C$ of size~$n+2$
  to~$G'$ and make~$C\cup V_E$ a clique.
  Finally, add a new vertex~$v^*$ to~$G'$ and make~$v^*$ adjacent to
  all vertices in~$V$.  Observe that~$v^*$ plus an arbitrary vertex
  from~$V_E\cup C$ are a dominating set of~$G'$ and that~$G'$ has
  diameter three.
  We complete the proof by showing that~$G$ has a clique of size~$k$
  $\Leftrightarrow$~$G'$ has a 2-club of size at least $|C|+|V_E|+k$.

  ``$\Rightarrow$:'' Let~$K$ be a size-$k$ clique in~$G$.  Then,~$S:=K\cup C\cup V_E$ is a size-$|C|+|V_E|+k$ 2-club in~$G$:
  First, each vertex in~$C\cup V_E$ has distance two to all other
  vertices~$S$. Second, each pair of vertices~$u,v\in K$
  is adjacent in~$G$ and thus they have the common neighbor~$e_{\{u,v\}}$
  in~$V_E$.  

  ``$\Leftarrow$:'' Let~$S$ be a 2-club of size $|C|+|V_E|+k$ in~$G'$.
  Since $|C|> |V\cup \{v^*\}|$, it follows that there is at
  least one vertex $c\in S\cap C$. Since~$c$ and~$v^*$ have distance
  three, it follows that $v^*\not\in S$. Now since~$S$ is a 2-club,
  each pair of vertices $u,v\in S\cap V$ has at least one common
  neighbor in~$S$. Hence,~$V_E$ contains the
  edge-vertex~$e_{\{u,v\}}$. Consequently,~$S\cap V$ is a size-$k$
  clique in~$G$.
\end{proof}

\section{Distance to Bipartite Graphs}\label{sec:dist-bipartite}
A 2-club in a bipartite graph is a biclique  and, thus, \tclub is polynomial-time solvable on bipartite graphs~\cite{Sch09}.
However, \tclub is already NP-hard on graphs that become bipartite by deleting only one vertex.

\begin{theorem}\label{thm:np-hard-dist-bipartite}
  \tclub is NP-hard even on graphs with distance one to bipartite graphs.
\end{theorem}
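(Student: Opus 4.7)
The plan is to prove NP-hardness via a polynomial-time reduction from \Clique, in the style of the reductions in \autoref{thm:NP-hard-independent-set} and \autoref{thm:NP-hard-dom}. Given a \Clique instance $(G=(V,E),k)$, I would build a graph~$G'$ from three ingredients: (i) a bipartite backbone encoding the vertex-edge incidences of~$G$ (for example, $V$ as an independent set on one side of a bipartition, edge-vertices $V_E=\{e_{uv}:\{u,v\}\in E\}$ on the other, with each~$e_{uv}$ adjacent to $u$ and~$v$); (ii) large twin padding sets attached to both sides of the bipartition, sized so that by \autoref{obs:twin-containment} every 2-club of the target size must contain them entirely; and (iii) a single extra vertex~$v^*$ whose adjacencies span both sides of the bipartition, so that $G'-v^*$ is bipartite (matching the backbone) while $G'$ itself contains odd cycles through~$v^*$. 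The target 2-club size would be set so that it corresponds to a $k$-clique together with the padding and~$v^*$.

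For correctness I would argue both directions. Forward: from a clique $K\subseteq V$ of size~$k$, I would explicitly build a 2-club in~$G'$ consisting of~$v^*$, all padding, the vertex-vertices in~$K$, and the edge-vertices for the edges inside~$K$; pairwise distance checks use that~$K$ is a clique (so each pair of vertex-vertices in~$K$ shares its edge-vertex as a common neighbor in~$S$) and that~$v^*$ together with the padding supplies common neighbors for the remaining pairs. Backward: a counting argument combined with \autoref{obs:twin-containment} forces any target-size 2-club~$S$ to contain the full padding and~$v^*$; then the 2-club constraint on~$G'[S]$ forces each pair of vertex-vertices in~$S\cap V$ to share an edge-vertex in~$S\cap V_E$, which directly translates to a $k$-clique in~$G$.

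The main obstacle is the design of the adjacencies of~$v^*$ and of the padding. Two opposing constraints must be balanced. On the one hand,~$v^*$ has to connect both sides of the bipartition to break bipartiteness while keeping~$G'-v^*$ bipartite. On the other hand, if~$v^*$ (or the padding) provides too many common neighbors, then pairs of vertex-vertices end up at distance two for free, so the required clique structure is not enforced; in the extreme, a universal~$v^*$ makes every set containing it a 2-club and the reduction trivializes. A secondary subtlety, because \tclub is polynomial-time solvable on bipartite graphs~\cite{Sch09}, is that the maximum 2-club in~$G'-v^*$ (a biclique in the bipartite backbone) must be strictly smaller than the target size, so that any target-size 2-club is forced to include~$v^*$; this is handled by choosing the padding sizes appropriately. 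The heart of the proof will be finding an adjacency pattern for~$v^*$ and the padding that simultaneously (a) spans both sides, (b) keeps the biclique-size of the bipartite part below the target, and (c) leaves the distance constraints among the backbone vertices strong enough to encode the clique condition.
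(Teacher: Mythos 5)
Your plan defers its crucial step---choosing the adjacencies of~$v^*$ and of the padding---and within the architecture you fix that step cannot be completed. The obstruction is the following structural fact: in a bipartite graph plus a single apex~$v^*$, two non-adjacent vertices on \emph{opposite} sides of the bipartition can have distance two only through~$v^*$, since any other common neighbor would force an edge inside one side. Now look at your own intended solution: the edge-vertex~$e_{uv}$ and a third clique vertex~$w \in K\setminus\{u,v\}$ lie on opposite sides and are non-adjacent, so already the \emph{forward} direction forces~$v^*$ to be adjacent to all of~$K$ and to all chosen edge-vertices; since the construction cannot know~$K$ in advance and the incidence encoding is uniform over~$V$ and~$V_E$, this means~$v^*$ must be adjacent to essentially all of~$V$ and~$V_E$. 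But then~$v^*\in S$ is itself a common neighbor of every pair of vertex-vertices, and your backward argument---each pair in~$S\cap V$ must share its edge-vertex in~$S$---collapses: the padding together with~$v^*$, \emph{any}~$k$ vertex-vertices and any~$\binom{k}{2}$ edge-vertices (picked so that each chosen vertex is incident to a chosen edge, which keeps the padding at distance two) is a 2-club of the target size, regardless of whether~$G$ has a $k$-clique. Padding cannot rescue this from the edge side either: a forced vertex on the $V_E$-side that is not adjacent to~$v^*$ would, by the same cross-side fact, have to be \emph{adjacent} to every selected vertex-vertex, and would then certify all pairs for free. So your conditions (a) and (c) are not merely in tension, as you suspect; for~$k\ge 3$ they are contradictory in the $V$-versus-$V_E$ incidence frame with one apex.

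The paper escapes this trap by not reducing from \Clique at all: \autoref{thm:np-hard-dist-bipartite} is proved by a reduction from \textsc{Maximum 2-SAT}. There the side dominated by~$v^*$ is a \emph{small} selection set~$V_X^1$ (one pair~$x_t,x_f$ per variable), so pairs of selected literals need no in-solution witnesses; the hard constraints are pushed onto a large forced set~$V_X^2$ that is \emph{not} adjacent to~$v^*$. Consistency of the encoded assignment is then enforced negatively, via exactly the cross-side fact you are fighting against: a literal vertex and the copies of its complement are non-adjacent across the bipartition, hence cannot coexist in a 2-club. Clause satisfaction becomes a same-side distance condition routed through~$V_X^1$ (a clause vertex reaches the forced~$V_X^2$ only via one of its selected literals), and the soft objective ``at least~$k$ clauses satisfied'' matches the size bound. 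This is why \textsc{Maximum 2-SAT} fits the one-apex bipartite frame while the pairwise hard constraints of \Clique, which demand a distinct in-solution witness per pair, do not; to repair your proof you would have to encode constraints by non-adjacency to forced sets rather than by unique common neighbors, which leads essentially to the paper's construction.
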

\begin{proof}
 We reduce from the NP-hard \textsc{Maximum 2-SAT} problem: Given a positive integer~$k$ and a set $\C:=\{C_1, \ldots, C_m\}$ of clauses over a variable set~$X = \{x_1, \ldots, x_n\}$ where each clause~$C_i$ contains two literals, the question is whether there is an assignment~$\beta$ that satisfies at least~$k$ clauses.

  Given an instance of~\textsc{Maximum 2-SAT} where we assume that each clause occurs only once, we construct an undirected graph~$G=(V,E)$.
The vertex set~$V$ consists of the four
  disjoint vertex sets~$V_\C$,~$V_F$,~$V_X^1$,~$V^2_X$, and
  one additional vertex~$v^*$. The construction of the four subsets
  of~$V$ is as follows.

  The vertex set~$V_\C$ contains one vertex~$c_i$ for each
  clause~$C_i\in \C$. The vertex set~$V_F$ contains for each
  variable~$x\in X$ exactly~$n^5$ vertices~$x^1\ldots x^{n^5}$. The
  vertex set~$V_X^1$ contains for each variable~$x\in X$ two
  vertices:~$x_t$ which corresponds to assigning true to~$x$ and~$x_f$
  which corresponds to assigning false to~$x$. The vertex set~$V_X^2$
  is constructed similarly, but for every variable~$x\in X$ it
  contains~$2\cdot n^3$ vertices:~the vertices~$x^1_t,\ldots
  x^{n^3}_t$ which correspond to assigning true to~$x$, and the
  vertices~$x^1_f,\ldots x^{n^3}_f$ which correspond to assigning
  false to~$x$.
  
  Next, we describe the construction of the edge set~$E$. The
  vertex~$v^*$ is made adjacent to all vertices in~$V_\C\cup
  V_F\cup V_X^1$. Each vertex~$c_i\in V_\C$ is made adjacent
  to the two vertices in~$V_X^1$ that correspond to the two literals
  in~$C_i$. Each vertex~$x^i\in V_F$ is made adjacent to~$x_t$
  and~$x_f$, that is, the two vertices of~$V_X^1$ that correspond to
  the two truth assignments for the variable~$x$. Finally, each
  vertex~$x^i_t\in~V_X^2$ is made adjacent to all vertices of~$V_X^1$
  except to the vertex~$x_f$. Similarly, each~$x^i_f\in~V_X^2$ is made
  adjacent to all vertices of~$V_X^1$ except to~$x_t$. This completes
  the construction of~$G$ which can clearly be performed in polynomial
  time. Observe that the removal of~$v^*$ makes~$G$ bipartite: each of
  the four vertex sets is an independent set and the vertices
  of~$V_\C$,~$V_F$, and~$V_X^2$ are only adjacent to vertices
  of~$V_X^1$.

  The main idea behind the construction is as follows. The size of the
  2-club forces the solution to contain the majority of the vertices in~$V_F$ and~$V_X^2$.
  As a consequence, for each~$x\in X$ exactly one of~$x_t$ or~$x_f$ is
  in the 2-club. Hence, the vertices from~$V_X^2$ in the 2-club
  represent a truth assignment. In order to fulfill the bound on the
  2-club size, at least~$k$ vertices from~$V_\C$ are in the
  2-club; these vertices can only be added if the corresponding
  clauses are satisfied by the represented truth assignment. It remains to prove the following claim:

	\medskip

  \noindent \emph{Claim.} $(\C,k)$ is a yes-instance of~\textsc{Maximum 2-Sat}
    $\Leftrightarrow$~$G$ has a 2-club of size~$n^6+n^4+n+k+1$.

\medskip

\emph{Proof.}
  ``$\Rightarrow$'': Let~$\beta$ be an assignment for~$X$ that
  satisfies~$k$ clauses~$C_1, \ldots, C_k$ of~$\C$. Consider
  the vertex set~$S$ that consists of~$V_F$,~$v^*$, the vertex
  set~$\{c_1, \ldots, c_k\}\subseteq V_\C$ that corresponds
  to the~$k$ satisfied clauses, and for each~$x\in X$ of the vertex
  set~$\{x_t, x_t^1,\ldots, x_t^{n^3}\}\subseteq V^1_X\cup V^2_X$ if~$\beta(x)=\textrm{true}$
  and the vertex set~$\{x_f, x_f^1,\ldots, x_f^{n^3}\}\in V^1_X\cup V^2_X$
  if~$\beta(x)=\textrm{false}$. Clearly,~$|S|=n^6+n^4+n+k+1$. In the
  following, we show that~$S$ is a 2-club. Herein,
  let~$S_X^1:=V_X^1\cap S$,~$S_X^2:=V_X^2\cap S$,
  and~$S_\C:=V_\C\cap S$.

  First,~$v^*$ is adjacent to all vertices in~$S_\C\cup V_F
  \cup S_X^1$. Hence,~all vertices of~$S\setminus S_X^2$ are within
  distance two in~$G[S]$. By construction, the vertex sets~$S_X^1$
  and~$S_X^2$ form a complete bipartite graph in~$G$: A
  vertex~$x_t^i\in S_X^2$ is adjacent to all vertices in~$V_X^1$
  except~$x_f$ which is not contained in~$S_X^1$. The same argument
  applies to some~$x_f^i\in S_X^2$. Hence, the vertices of~$S_X^2$ are
  neighbors of all vertices in~$S_X^1$. This also implies that the
  vertices of~$S_X^2$ are in~$G[S]$ within distance two from~$v^*$ and
  from every vertex in~$V_F$ since each vertex of~$V_F\cup~\{v^*\}$ has
  at least one neighbor in~$S_X^1$. Finally, since the~$k$ vertices
  in~$S_\C$ correspond to clauses that are satisfied by the
  truth assignment~$\beta$, each of these vertices has at least one
  neighbor in~$S_X^1$. Hence, every vertex in~$S_X^2$ has
  in~$G[S]$~distance at most two to every vertex in~$S_\C$.

  ``$\Leftarrow$'': Let~$S$ be a 2-club of size~$n^6+n^4+n+k+1$, and
  let~$S_X^1:=V_X^1\cap S$,~$S_X^2:=V_X^2\cap S$,~$S_F:=V_F \cap S$
  and~$S_\C:=V_\C\cap
  S$. Clearly, neither $S_X^2=\emptyset$ nor $S_F=\emptyset$.

  Since~$|V_\C|+|V_X^1|+|V_X^2|+1\le n^2+2n+2n^4+1< n^5$
  for sufficiently large~$n$, $S$~contains more than~$n^6-n^5$
  vertices from~$V_F$. Consequently, for each~$x\in X$ there is an index $1\le i\le n^5$ such that $x^i\in S_F$.

  We next show that for each $x\in X$ it holds that either~$x_t$ or~$x_f$ is contained in~$S_X^1$. Towards this, since~$S$ is a 2-club, every vertex pair~$x^i\in S_F$ and~$u\in S_X^2$ has at least one common neighbor in~$S$. By construction, this common neighbor is a vertex of~$S_X^1$ and thus either~$x_t$ or~$x_f$. Moreover, by the observation above for each $x\in X$ at least one $x^i$ is contained in~$S_F$. Thus, for each $x\in X$ at least one of~$x_t$ and~$x_f$ is contained in~$S_X^1$. 

  Now observe that,~$G[S_X^1\cup S_X^2]$ is a complete bipartite
  graph, since~$S_X^1$ and~$S_X^2$ are independent sets and~$S_X^2$
  has only neighbors in~$S_X^1$. This implies that if for some~$x\in X$ there exists indices $1\le i,j\le n^3$ with $x_t^i$ and~$x_f^j$ are in~$S_X^2$, then~$x_t$ and~$x_f$ are \emph{not} in~$S_X^1$. This contradicts the above observation that at least one of~$x_t$ and~$x_f$ is in~$S_X^1$. 
  Moreover,  since~$|V_\C|+|V_X^1|+1\le n^2+2n+1<n^3$ and $|S\setminus V_F|> n^4$, we have~$|S_X^2|> n^4-n^3$. It follows that for each $x\in X$ there is an index $1\le i\le n^3$ such that either $x^i_t\in S_X^2$ or $x^i_f\in S_X^2$. Finally, this implies that either~$x_t$ or~$x_f$ is not contained in~$S_X^1$. 

  Summarizing,~$S$ has at most~$n^6$ vertices from~$V_F$, at most~$n^4$ vertices belonging to~$S_X^2$, exactly~$n$ vertices belonging to~$S_X^1$, and thus there are~$k+1$ vertices in~$S_\C\cup \{v^*\}$.   Since~$S$ is a~2-club that has nonempty~$S_X^2$, every one of the at least~$k$ vertices from~$S_\C$ has at least one neighbor in~$S_X^1$. Because for each~$x\in X$ either~$x_f$ or~$x_t$ is in~$S_X^1$, the~$n$ vertices from~$S_X^1$ correspond to an assignment~$\beta$ of~$X$.  By the above observation, this assignment satisfies at least~$k$ clauses of~$\C$.
\end{proof}

\section{Average Degree and $h$-Index}\label{sec:h-index}

\newcommand{\XPhindexAlgRunTime}{O(2^{k^4}\cdot n^{2^k}\cdot n^2m)}
\tclub is fixed-parameter tractable for the parameter \emph{maximum degree} (the algorithm of \citet{SKMN11} can be analyzed in that way without any changes).  
It has been observed that in large-scale biological~\cite{JTAOB00} and social networks~\cite{BA99} the degree distribution  often follows a power law, implying that there are some high-degree vertices while most vertices have low degree.
This suggests considering stronger, that is, provably smaller, 
parameters such as \emph{$h$-index}, \emph{degeneracy}, and \emph{average degree}. 
For any graph it holds that $\text{avg. degree}\le 2 \cdot \text{degeneracy}\le 2\cdot h$-index, see also \autoref{fig:result-overview} for other relationships.
Furthermore, analyzing the coauthor network derived from the DBLP dataset\footnote{The dataset and a corresponding documentation are available online (\url{http://dblp.uni-trier.de/xml/}). Accessed Feb.~2012} with more than 715,000 vertices, maximum degree~804, $h$-index~208, degeneracy~113, and average degree~7 shows that also in real-world social networks these parameters are considerably smaller than the maximum degree (see~\cite{HKN12} for an analysis of these parameters on a broader dataset).

Unsurprisingly,  \tclub is NP-hard even with constant \emph{average degree}.

\begin{proposition}\label{prop:NP-hard-constant-avg-degree}
 For any constant $\alpha>2$, \tclub is NP-hard on connected graphs with \emph{average degree} at most~$\alpha$.
\end{proposition}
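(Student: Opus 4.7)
The plan is a straightforward padding reduction from an instance of \tclub that is already known to be NP-hard. I start from the hard instance $(G,\ell)$ produced by the \Clique-reduction in the proof of \autoref{thm:NP-hard-independent-set}: there $\ell = 2n^5 + kn^3 + 2k + 2\binom{k}{2}$ is large, whereas any vertex of $V^V_1$ has degree only $O(n^5)$, so for non-trivial inputs one can fix a vertex $v \in V(G)$ with $\deg_G(v) \leq \ell - 3$ and $\ell \geq 6$. Given the target bound $\alpha > 2$, I construct $G'$ from $G$ by adding a fresh induced path on $N$ new vertices $p_1, p_2, \ldots, p_N$ together with the single edge $\{v, p_1\}$. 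The resulting graph is connected, has $|V(G)| + N$ vertices and $|E(G)| + N$ edges, and its average degree $\frac{2(|E(G)| + N)}{|V(G)| + N}$ tends to $2$ as $N \to \infty$; picking $N := \lceil (2|E(G)| - \alpha |V(G)|)/(\alpha - 2) \rceil$ drives the average degree down to at most $\alpha$, and $N$ is polynomial in $|V(G)|+|E(G)|$, so the construction is polynomial time.

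The core claim is that $G$ has a 2-club of size $\ell$ if and only if $G'$ does. The forward direction is immediate, since $G$ is an induced subgraph of $G'$ and the pendant path introduces no shortcuts between vertices of $V(G)$. For the converse, take any 2-club $S \subseteq V(G')$ of size $\ell$ and show that $S \subseteq V(G)$. A routine check gives $|N^{G'}_2[p_i]| \leq 5$ for every $i \geq 2$, so $S$ cannot contain any such $p_i$ (otherwise $|S| \leq 5 < \ell$). And if $p_1 \in S$, then $N^{G'}_2[p_1] \cap V(G) = N^G[v]$ forces $S \setminus \{p_1\} \subseteq N^G[v]$, whence $|S| \leq |N^G[v]| + 1 \leq \ell - 1 < \ell$, again a contradiction. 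Hence $S \subseteq V(G)$, and $S$ is a 2-club of size $\ell$ in $G$.

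The main (and essentially only) obstacle is to make sure that attaching the pendant path does not create a spurious 2-club of size $\ell$ at the attachment vertex: the bound $\deg_G(v) \leq \ell - 3$ is chosen exactly to keep $|N^G[v] \cup \{p_1\}|$ strictly below~$\ell$, which is what eliminates the unwanted case in the argument above. Everything else is routine, and together with the NP-hardness of the source instance this yields the desired conclusion.
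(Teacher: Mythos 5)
Your proof is correct and follows essentially the same route as the paper's: both attach a long pendant path of length roughly $\lceil 2m/(\alpha-2)\rceil$ to a vertex whose closed neighborhood is smaller than~$\size$ (the paper assumes $\size>\Delta+2$, citing the instances from \autoref{thm:NP-hard-independent-set}, exactly as you do), and argue that no 2-club of size~$\size$ can use path vertices. Your write-up is in fact a bit more explicit than the paper's, spelling out the cases $p_i$, $i\ge 2$ (via $|N_2[p_i]|\le 5$) and $p_1$ (via the degree bound) that the paper treats tersely, but there is no substantive difference in approach.
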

\begin{proof}
  Let $(G,\size)$ be an instance of \tclub where~$\Delta$ is the
  maximum degree of~$G$. We can assume that~$\ell>\Delta+2$ since, as
  shown for instance in the proof of~\autoref{thm:NP-hard-independent-set},~\tclub remains
  NP-hard in this case. We add a path~$P$ to~$G$ and an edge from an
  endpoint~$p$ of~$P$ to an arbitrary vertex~$v\in
  V$. Since~$\ell>\Delta+2$, any $2$-club of size at least~$\ell$
  contains at least one vertex that is not in~$P$. Furthermore, it
  cannot contain~$p$ and~$v$ since in this case it is a subset of
  either~$N[v]$ or~$N[p]$ which both have size at most~$\Delta+2$ ($v$
  has degree at most~$\Delta$ in~$G$). Hence, the instances are
  equivalent.  Putting at least $\lceil\frac{2m}{\alpha-2}-n\rceil$
  vertices in~$P$ ensures that the resulting graph has average
  degree at most~$\alpha$.
\end{proof}

We remark that the bound provided in \autoref{prop:NP-hard-constant-avg-degree} is tight: Consider a connected graph~$G$ with average degree at most two, that is,~$\frac{1}{n} \sum_{v \in V} \deg(v) \le 2$. 
Since~$\sum_{v \in V} \deg(v) = 2m$, it follows that~$n \ge m$ and, thus, the feedback edge set of~$G$ contains at most one edge.
As \tclub is fixed-parameter tractable with respect to the (size of a) \emph{feedback edge set}~\cite{HKN12}, it follows that \tclub can be solved in polynomial time on connected graphs with \emph{average degree} at most two.

\autoref{prop:NP-hard-constant-avg-degree} suggests considering
``weaker'' parameters such as \emph{degeneracy} or
\emph{$h$-index}~\cite{ES09} of~$G$ (see \autoref{fig:result-overview}).  Recall that having $h$-index~$k$
means that there are at most~$k$ vertices with degree greater
than~$k$.  Since social networks have small $h$-index~\cite{HKN12}, fixed-parameter
tractability with respect to the~$h$-index would be
desirable. Unfortunately, we show that \tclub is W[1]-hard when
parameterized by the $h$-index and NP-hard with constant \emph{degeneracy}.
Following this result, we show that there is ``at
least'' an XP-algorithm implying that \tclub is  polynomial-time solvable for constant~$h$-index.

We reduce from the W[1]-hard \MCC problem~\cite{FHRV09}.

 \decprob{\MCC{}}%
{An undirected graph~$G = (V,E)$, $k\in \mathbb{N}$, and a (vertex) coloring $c:V\rightarrow\{1,\ldots,k\}$.}  {Is
  there a multicolored clique of size~$k$ in~$G$, that is, a clique~$C\subseteq V$ such that $c(v)\neq c(v')$ for all $\{v,v'\}\subseteq V$ with $v\neq v'$?}
 
\begin{lemma}\label{lem:h-index-hardness}
 There are two polynomial-time computable reductions that compute for any instance $(G,c,k)$ of  \MCC an equivalent \tclub-instance~$(G',\size)$ such that~$G'$ has \emph{diameter} three and, additionally, in reduction i)~$G'$ has \emph{$h$-index} at most~$k+7$ and in reduction~ii)~$G'$ has \emph{degeneracy} five.
\end{lemma}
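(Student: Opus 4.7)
The plan is to reduce from \MCC and construct, in both variants, a graph $G'$ with the same general skeleton, differing only in how ``concentrator'' vertices are implemented. Given an \MCC instance $(G=(V,E),c,k)$, for each color $i\in[k]$ I would introduce a \emph{color gadget} containing, for every vertex $v\in V$ with $c(v)=i$, a copy $v^{(i)}$, a constant-size attachment structure at $v^{(i)}$, and a large set of ``padding'' twins. For each edge $\{u,v\}\in E$ with $c(u)=i\ne j=c(v)$, I would add an \emph{edge witness} $e_{uv}$ that is adjacent only to $v^{(i)}$ and $u^{(j)}$ (together with its own private twin-padding). A few further auxiliary vertices enforce that $G'$ has diameter three. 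The target size $\size$ is chosen so that the padding sets cannot be omitted from a 2-club of size $\size$; by \autoref{obs:twin-containment} I may then assume they are all present.

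With the padding forced in, the size accounting forces the 2-club to pick exactly one copy $v^{(i)}$ per color and an edge witness $e_{uv}$ for every pair of chosen copies. Since an edge witness is present in $G'$ only if the corresponding edge exists in $G$, the $k$ selected vertices must form a multicolored clique in $G$. Conversely, given a multicolored $k$-clique, taking the $k$ chosen copies, the $\binom{k}{2}$ edge witnesses, all padding, and the auxiliaries yields a 2-club of size $\size$: every pair of vertices meets at one of the auxiliary vertices, inside a single color gadget, or at an edge witness, within distance two.

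The two reductions now differ only in how the ``global'' connectivity is realized. For variant~(i), I implement each color gadget using one hub vertex $h_i$ adjacent to all copies $v^{(i)}$ of color $i$, together with a constant number (at most seven) of global auxiliary vertices that enforce diameter three. Every other vertex, namely the copies, edge witnesses, and padding twins, has degree bounded by a constant independent of $k$, so at most $k+7$ vertices can have degree exceeding $k+7$, giving $h$-index $\le k+7$. For variant~(ii), I replace each hub and each global auxiliary vertex by a sparse substructure (subdivided stars plus twin pendants) in which every vertex has at most five neighbors inside $G'$. Padding twins stay attached by only two edges each, edge witnesses by three, and every other local structure is of constant degree; repeatedly peeling the pendants and subdivision vertices shows degeneracy five, while the short subdivisions preserve diameter three and the intended 2-club arguments.

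The routine direction of correctness is the forward one; the main obstacle is the backward direction in variant~(ii). Subdividing or distributing the hubs risks increasing distances beyond two inside the intended 2-club and also risks breaking the size-counting argument that forced all padding into the 2-club. I would handle this by letting each sparse replacement retain a small set of ``meeting points'' reachable in one step from every copy of its color, so that any two vertices of different colors still have a common neighbor among the forced padding; and by choosing the size $\size$ so that leaving out any single twin class would violate it, which together with \autoref{obs:twin-containment} continues to pin down the structure of an optimal 2-club and lets the multicolored clique be read off as before.
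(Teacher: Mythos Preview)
Your sketch has a genuine gap in both variants, and it is precisely the point where the paper's construction does the real work.

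In variant~(i) you assert that ``the copies, edge witnesses, and padding twins'' have degree bounded by a constant independent of~$k$. But each copy $v^{(i)}$ is, by your own description, an endpoint of the edge witness $e_{uv}$ for \emph{every} edge of~$G$ incident to~$v$; hence $\deg_{G'}(v^{(i)}) \ge \deg_G(v)$, which is not bounded by any function of~$k$. Thus there may be far more than $k+7$ vertices of degree exceeding $k+7$, and the $h$-index bound fails. The paper avoids this by replacing each vertex~$v$ not by a single copy but by a cycle of length $3n+3$ (the $\alpha$-, $\beta$-, $\gamma$-vertices), so that each edge of~$G$ attaches to a \emph{distinct} position on the cycle. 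Every cycle vertex then has degree~$O(1)$, yet the three global anchors $u_\alpha,u_\beta,u_\gamma$ still force the whole cycle into or out of the 2-club as a unit. This ``spread the adjacency over a long gadget'' idea is the missing ingredient.

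In variant~(ii) the gap is different: you propose to replace the hubs and global auxiliaries by subdivided stars so that ``every vertex has at most five neighbors''. But degeneracy does not require bounded maximum degree; indeed the paper's degeneracy-five graph contains vertices (such as $r_1,r_2,u_\alpha$) of degree $\Theta(n^3)$. Your subdivision idea, by contrast, turns every length-two path through a hub into a length-four path, destroying the 2-club, and your proposed fix---``meeting points reachable in one step from every copy of its color''---either reinstates a high-degree hub or fails to give common neighbours. The paper instead keeps the high-degree anchors and changes only the \emph{coloring gadget}: rather than one vertex $c_i$ per color (degree $\Theta(n)$), it uses one vertex per \emph{pair} of differently-coloured vertices, each of degree exactly five, and then exhibits an explicit elimination order witnessing degeneracy five.
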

\begin{proof}
  \crefname{enumi}{property}{properties}
  \Crefname{enumi}{Property}{Properties}
  The only difference between both reductions is the construction of a so-called \emph{coloring gadget}. We first describe the common part.
 Let $(G,c,k)$ with $G=(V,E)$ and $c:V\rightarrow\{1,\ldots,k\}$ be an instance of \MCC. We construct a graph~$G'$ and choose $\size\in \mathbb{N}$ such that $(G',\size)$ is a yes-instance for \tclub if and only if $(G,c,k)$ is a yes-instance for \MCC. 
 We will first construct some structures in~$G'$ which allow to describe the basic ideas: 
 For each vertex~$v \in V$ create a \emph{vertex gadget} by adding the \emph{$\alpha$-vertices} $\{\alpha^v_1, \ldots, \alpha^v_n\}$, the \emph{$\beta$-vertices} $\{\beta^v_1, \ldots, \beta^v_{n+1}\}$, and the \emph{$\gamma$-vertices} $\{\gamma^v_1, \ldots, \gamma^v_n\}$, and $\{\omega^v_{\alpha},\omega^v_{\gamma}\}$. Add edges such that~$(\alpha^v_1,\beta^v_1,\gamma^v_1,\alpha^v_2,\beta^v_2,\gamma^v_2,\ldots,\alpha^v_n,\beta^v_n,\gamma^v_n,\omega^v_{\alpha},\beta^v_{n+1},\omega^v_{\gamma},\alpha^v_1)$ induces a cycle. Add the three vertices 
 $$U=\{u_\alpha,u_\beta,u_\gamma\}$$ 
 and add edges from all $\alpha$- ($\beta$-,$\gamma$-)vertices to~$u_\alpha$ ($u_\beta,u_\gamma$), respectively. 
 Add the edges $\{\omega^v_{\alpha}, u_\alpha\}$ and $\{\omega^v_{\gamma},u_\gamma\}$.
 Furthermore, for a fixed ordering $V=\{v_1, \ldots, v_n\}$ add for each edge $\{v_i, v_j\} \in E$ an \emph{edge-vertex}~$e_{v_i,v_j}$ that is adjacent to each of~$\{\alpha^{v_i}_{j},\beta^{v_i}_{j},\gamma^{v_j}_{i}\}$.
 (Observe that the $\alpha$- and $\gamma$-vertex neighbor are in different vertex gadgets.) 
 The following property is fulfilled:
 
 \begin{enumerate}
  \item For each vertex~$v$ in a vertex gadget it holds that $|N(v)\cap U|=1$ and $|N(v)\cap N(u)|=1$ for each $u\in U\setminus N(u)$.\label{en:VG-vertices-anchors}
   \end{enumerate}
 
 The idea of the construction is that~$U$ will be forced to be contained in any 2-club~$S$ of size at least~$\size$. Hence by \Cref{en:VG-vertices-anchors} it follows that if an $\alpha$-vertex in a vertex gadget is contained in~$S$, then the unique $\beta$- and $\gamma$-vertex in its neighborhood has to be contained in~$S$ as well. Since this argument symmetrically holds for~$\beta$- and $\gamma$-vertices, it follows that either all or none of the vertices from a vertex gadget are contained in~$S$. Observe that, in this context, $\omega^v_{\alpha}$ ($\omega^v_{\gamma}$) behaves like a ``normal'' $\alpha$- \mbox{($\gamma$-)} vertex. Analogously, each edge-vertex~$e_{v_i,v_j}\in S$ needs to have a common neighbor with each vertex of~$U$. Thus, $e_{v_i,v_j}\in S$ implies that all vertices in the two vertex gadgets that correspond to~$v_i$ and~$v_j$ are contained in~$S$. 
 By connecting the vertices~$\{\omega^v_{\alpha},\omega^v_{\gamma}\}$ appropriately we will ensure that for each color~$c$ at most one vertex gadget whose vertex in~$G$ is colored with~$c$ can have a non-empty intersection with~$S$. (The construction of the corresponding coloring gadget is the only part where the two reductions differ.) Furthermore, we choose the value of~$\size$ such that~$S$ contains vertices from at least~$k$ vertex gadgets and at least~$\binom{k}{2}$ edge-vertices. Hence, there are exactly~$k$ vertex gadgets together with $\binom{k}{2}$ edge-vertices   that contribute to~$S$. Since the vertices corresponding to the vertex gadgets have different colors and since the endpoints of the edges corresponding to the~$\binom{k}{2}$ edge-vertices are all within this set of~$k$ vertices, the set~$S$ corresponds to a multicolored clique in~$G$. 
 
 To complete the construction and to ensure the properties discussed above, we next add the \emph{anchor gadget} and the \emph{coloring gadget}. To argue about their correctness we claim that, eventually, 
 \begin{equation}
      |V'| = \underbrace{n(3n+3)}_{n\text{ vertex gadgets}}+\underbrace{4n^3+7}_{\text{anchor gadget}}+\underbrace{m}_{m \text{ edge-vertices}}+\underbrace{|V_C|}_{\text{coloring gadget}} \label{eq:size-V'}
 \end{equation}
and we set \begin{equation}
\size:=\underbrace{k(3n+3)}_{k\text{ vertex gadgets}}+\underbrace{4n^3+7}_{\text{anchor gadget}} +\underbrace{\binom{k}{2}}_{\binom{k}{2} \text{ edge-vertices}}+\underbrace{|V_C|}_{\text{coloring gadget}}.          \label{eq:size-l}  
           \end{equation}
 
\noindent\textbf{Anchor Gadget:} We denote by~$V_A$ the set of all vertices in the anchor gadget including~$U$ and it will have size $4n^3+7$. Besides~$U$ the anchor gadget will contain only four other vertices, namely $\{l_U,l,r_1,r_2\}$, that have neighbors outside the gadget. Before describing the construction we will list some properties of it that will be used in the argumentation later on.
 \begin{enumerate}[resume]
  \item The set~$U$ is contained in any 2-club in~$G'$ of size at least~$\size$.\label{en:anchors}
  \item A 2-club of size at least~$\size$ contains either all or none of the vertices of a vertex gadget.\label{en:fully-vertex-gadget}
  \item For any two vertices $u\in U$ and $v\in \{r_1,r_2\}$ it holds that $(N(u)\cup N(v))\cap V_A=V_A\setminus (U \cup \{v\})$, that $(N(l)\cup N(v))\cap V_A=V_A\setminus U$, and $(N(l_U)\cup N(v))\cap V_A=V_A$. \label{en:s-neigh}
 \end{enumerate}

 Informally, \Cref{en:s-neigh} ensures that if a vertex is adjacent to one of $\{l_U,l\}\cup U$ and to one of $\{r_1,r_2\}$, then it has distance at most two to all vertices in $V_A\setminus U$ and also distance at most two to all of~$U$ if it is adjacent to~$l_U$.
 
 	\begin{figure}[t]
		\begin{center}
		\def\layersep{2.4cm}
		\def\ylayersep{1cm}
		\def\numberOfSetVertices{7}
		\begin{tikzpicture}[draw=black!80, scale=1]
			\tikzstyle{vertex}=[circle,draw=black!80,minimum size=14pt,inner sep=0pt]
			\tikzstyle{thinedges}=[draw=black!25]
			\tikzstyle{small-vertex}=[circle,draw=black!80,minimum size=3pt,inner sep=0pt,fill=white]

			\node[vertex] (ua) at (0.7*\layersep,0) {$u_\alpha$};
			\node[vertex] (ub) at (0.7*\layersep,-\ylayersep) {$u_\beta$};
			\node[vertex] (ug) at (0.7*\layersep,-2*\ylayersep) {$u_\gamma$};

			\node[vertex] (s-) at 	(0.7*\layersep,-4*\ylayersep) {$r_1$};
			\node[vertex] (s) at 	(0.7*\layersep,-5*\ylayersep) {$r_2$};

			\node[vertex] (uc) at 	(0,-5*\ylayersep) {$l$};
			\node[vertex] (r2) at 	(0,-4*\ylayersep) {$l_U$};

			\foreach \y in {1,...,\numberOfSetVertices}{
				\node[small-vertex] (Vabc-\y) at (- \layersep, - \y * 1.5 * \ylayersep / \numberOfSetVertices - 1.8*\ylayersep) {};
				\path[thinedges] (ua) edge (Vabc-\y);
				\path[thinedges] (ub) edge (Vabc-\y);
				\path[thinedges] (ug) edge (Vabc-\y);
				\path[thinedges] (uc) edge (Vabc-\y);
				\path[thinedges] (r2) edge (Vabc-\y);
			}
			\node  [left of=Vabc-5]  {$V_{\alpha,\beta,\gamma}$};

			\foreach \y in {1,...,\numberOfSetVertices}{
				\node[small-vertex] (Va-\y) at (2*\layersep,- \y * 1.5 * \ylayersep / \numberOfSetVertices) {};
				\path[thinedges] (s) edge (Va-\y);
				\path[thinedges] (s-) edge (Va-\y);
				\path[thinedges] (ua) edge (Va-\y);
			}
			\node  [right of=Va-5]  {$V_{\alpha}$};
			\foreach \y in {1,...,\numberOfSetVertices}{
				\node[small-vertex] (Vb-\y) at (2*\layersep,- \y * 1.5 * \ylayersep / \numberOfSetVertices - 1.8*\ylayersep) {};
				\path[thinedges] (s) edge (Vb-\y);
				\path[thinedges] (s-) edge (Vb-\y);
				\path[thinedges] (ub) edge (Vb-\y);
			}
			\node  [right of=Vb-5]  {$V_{\beta}$};
			\foreach \y in {1,...,\numberOfSetVertices}{
				\node[small-vertex] (Vc-\y) at (2*\layersep,- \y * 1.5 * \ylayersep / \numberOfSetVertices - 3.6*\ylayersep) {};
				\path[thinedges] (s) edge (Vc-\y);
				\path[thinedges] (s-) edge (Vc-\y);
				\path[thinedges] (ug) edge (Vc-\y);
			}
			\node  [right of=Vc-5]  {$V_{\gamma}$};
			
			\path (r2) edge (uc);
			\path (r2) edge (s);
			\path (r2) edge (s-);
			\path (s) edge (uc);			
			\path (s) edge (s-);
			\path (s-) edge (uc);
			\path (r2) edge (ua);
			\path (r2) edge (ub);
			\path (r2) edge (ug);

			\begin{pgfonlayer}{background}
				\filldraw [line width=3mm,join=round,black!10]
				(Vabc-1.north  -| Vabc-1.east)  rectangle (Vabc-\numberOfSetVertices.south  -| Vabc-\numberOfSetVertices.west)
				(Va-1.north  -| Va-1.east)  rectangle (Va-\numberOfSetVertices.south  -| Va-\numberOfSetVertices.west)
				(Vb-1.north  -| Vb-1.east)  rectangle (Vb-\numberOfSetVertices.south  -| Vb-\numberOfSetVertices.west)
				(Vc-1.north  -| Vc-1.east)  rectangle (Vc-\numberOfSetVertices.south  -| Vc-\numberOfSetVertices.west);
			\end{pgfonlayer}
		\end{tikzpicture}
		\end{center}
		\caption{The anchor gadget. The vertices $\{u_\alpha,u_\beta,u_\gamma,l_U,l,r_1,r_2\}$ are the only vertices which have neighbors outside the anchor gadget. All the vertices in the sets $V_{\alpha,\beta,\gamma},V_\alpha,V_\beta$, and~$V_\gamma$ are twins and $u_\alpha$ ($u_\beta,u_\gamma$) is the only common neighbor between~$V_{\alpha,\beta,\gamma}$ and $V_\alpha$ ($V_\beta,V_\gamma$, resp.).\label{fig:connection_gadget_I}}		
	\end{figure}
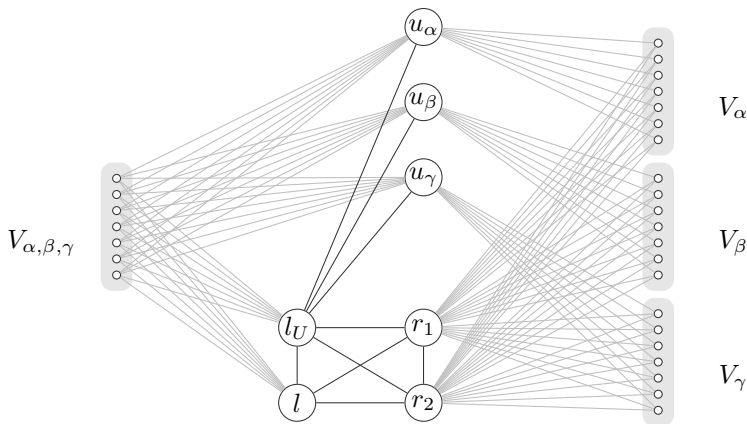
 
 The anchor gadget is constructed as follows (see \autoref{fig:connection_gadget_I}): Add four sets $V_\alpha,V_\beta,V_\gamma,V_{\alpha,\beta,\gamma}$ each of size~$n^3$ and add edges from each vertex in~$V_{\alpha,\beta,\gamma}$ to each in $U\cup \{l,l_U\}$.
  Additionally, add edges from each vertex in~$V_\alpha$ to each of $\{u_\alpha,r_1,r_2\}$, from each vertex in~$V_\beta$ to each of $\{u_\beta,r_1,r_2\}$, and from each vertex in~$V_\gamma$ to each of $\{u_\gamma,r_1,r_2\}$.
   Finally, add edges such that $\{l_U,l,r_1,r_2\}$ is a clique and an edge from~$l_U$ to each vertex in~$U$.
 
	By the construction above, \Cref{en:s-neigh} is fulfilled and the anchor gadget is a 2-club. Observe that $u_\alpha$ ($u_\beta,u_\gamma$) is the only common neighbor of any vertex in~$V_{\alpha,\beta,\gamma}$ and any vertex in~$V_\alpha$ ($V_\beta,V_\gamma$, resp.) and hence if at least one vertex from each set $V_\alpha,V_\beta,V_\gamma,V_{\alpha,\beta,\gamma}$ is contained in a 2-club, then also~$U$ is contained. 
	To prove \Cref{en:anchors}, let~$S \subseteq V'$ be a 2-club of size~$\size$ that is disjoint to at least one of $\{V_\alpha ,V_\beta ,V_\gamma ,V_{\alpha,\beta,\gamma}\}$.
	The number of vertices that are \emph{not} in~$S$ is at most~$|V'|-\size$, which is (see \Cref{eq:size-V',eq:size-l}):
	\begin{equation*}
	 |V'|-\size =(n-k)(3n+3)+m-\binom{k}{2}<n^3.
	\end{equation*}
	
 This implies a contradiction and proves \Cref{en:anchors}. As argued above \Cref{en:VG-vertices-anchors,en:anchors} imply the correctness of \Cref{en:fully-vertex-gadget}.
 
  Recall that so far only~$U$ has neighbors outside the anchor gadget, namely all $\alpha$- ($\beta-,\gamma$-)vertices are adjacent to $u_\alpha$ ($u_\beta,u_\gamma$, resp.) and $\omega^v_{\alpha}$ ($\omega^v_{\gamma}$) is adjacent to~$u_\alpha$ ($u_\gamma$). We describe via properties how to connect the anchor gadget to the vertex gadgets.
  
  \begin{enumerate}[resume]
   \item $\omega^v_{\alpha}$ is adjacent to~$r_1$ and $\omega^v_{\gamma}$ is adjacent to~$r_2$ for all $v\in V$.\label{en:vertex-gadget-to-anchor-1}
   \item All $\alpha-,\beta-$, and $\gamma$-vertices and all edge-vertices are adjacent to each of $\{r_1,r_2\}$. Additionally, each edge-vertex is adjacent to~$l$.\label{en:vertex-gadget-to-anchor-2}
  \end{enumerate}
  Observe that \Cref{en:vertex-gadget-to-anchor-2} does not violate the correctness of \Cref{en:fully-vertex-gadget} since the vertices~$r_1,r_2$ are not neighbors of any vertex in~$U$ (see \Cref{en:s-neigh}).  
  \Cref{en:vertex-gadget-to-anchor-1,en:vertex-gadget-to-anchor-2,en:s-neigh} together imply 
  that all vertex pairs in~$G'$ except $\{\omega^v_{\alpha},\omega^{v'}_{\gamma}\}$ with $v\neq v'$  have distance at most two. We next construct the so-called coloring gadget that guarantees that only those vertex pairs  $\{\omega^v_{\alpha},\omega^{v'}_{\gamma}\}$ have a common neighbor (and thus can be contained in any 2-club) for which $c(v)\neq c(v')$. We will give two different constructions of the coloring gadget where the first guarantees an $h$-index of at most~$k+7$ and the second guarantees degeneracy five. Denoting the set of vertices in the coloring gadget by~$V_C$ both constructions fulfill the following properties:
  \begin{enumerate}[resume]
   \item Each vertex in~$V_C$ is adjacent to each of $\{l_U,r_1,r_2\}$.\label{en:V_C-anchor-gadget}
   \item Any pair $\{\omega^v_{\alpha},\omega^{v'}_{\gamma}\}$,~$v\neq v'$, has a common neighbor in~$V_C$ if and only if $c(v)\neq c(v')$.\label{en:omega-no-conflict}
  \end{enumerate}
  The two properties above are sufficient to prove the correctness of both reductions. 

  \textbf{Coloring gadget i):} 	For each color~$i \in \{1,\ldots,k\}$ add a vertex~$c_i$ and let~$V_C=\{c_1, \ldots , c_k\}$ the vertex set containing these vertices. Add an edge between a vertex $\omega^v_{\alpha}$ and~$c_i$ if $c(v)=i$ and an edge from $\omega^v_{\gamma}$ to~$c_i$ if $c(v)\neq i$ (\Cref{en:omega-no-conflict}). Finally, add edges such that each vertex in~$V_C$ is adjacent to each vertex in~$\{l_U,r_1,r_2\}$ (\Cref{en:V_C-anchor-gadget}).
  
  Note that the $h$-index of~$G'$ is at most $|V_C|+|U|+|\{l_U,l,r_1,r_2\}|=k+7$, as the vertices in $V_C\cup U \cup \{l_U,l,r_1,r_2\}$ are the only ones that might have degree at least~$k+7$.
  
  \textbf{Coloring gadget ii):} For each pair $\{\omega^{v}_{\alpha},\omega^{v'}_{\gamma}\}$ with $c(v)\neq c(v')$ add a vertex~$c_{v,v'}$ that is adjacent to each of~$\{\omega^{v}_{\alpha},\omega^{v'}_{\gamma}\}$ (\Cref{en:omega-no-conflict}). Finally, denoting all these new vertices by~$V_C$ we add an edge from each vertex in~$V_C$ to each vertex in~$\{l_U,r_1,r_2\}$ (\Cref{en:V_C-anchor-gadget}).

  We next prove that~$G'$ has degeneracy five by giving an elimination ordering, that is, an order of how to delete vertices of degree at most five that results in an empty graph: In the anchor gadget each of the vertices in $V_\alpha,V_\beta,V_\gamma,V_{\alpha,\beta,\gamma}$ has maximum degree five and hence they can be deleted. Then, delete all vertices in~$V_C$, as each of them also has degree five. Delete all edge-vertices (they also have degree five). In the remaining graph each vertex in a vertex gadget (see \Cref{en:vertex-gadget-to-anchor-2}) is adjacent to its two neighbors in its vertex gadget, adjacent to one of~$U$, and one or two neighbors in $\{r_1,r_2\}$. Hence, all vertices in vertex gadgets can be removed as they have degree at most five. The remaining vertices are $U\cup \{l_U,l,r_1,r_2\}$ and all vertices in $U\cup \{l,r_1,r_2\}$ have maximum degree four.  
  
  It remains to prove the correctness of the two reductions:\medskip

    \noindent \emph{Claim.} $(G,\c,k)$ is a yes-instance of
          \MCC~$\iff$~$(G',\size)$ is a yes-instance of
          \tclub.

   ``$\Rightarrow$`` Let~$C$ be a multicolored clique in~$G$ of size~$k$. We construct a set~$S\subseteq V'$ of size~$\size$ and prove that it is a 2-club in~$G'$. The set~$S$ contains each vertex gadget that corresponds to some vertex in~$C$, the coloring gadget, the anchor gadget, and any edge vertex~$e_{v_i,v_j}$ with $v_i,v_j\in C$. See \Cref{eq:size-l} to verify that $|S|=\size$. To verify that~$S$ is a 2-club, note that for each vertex~$v$ in a vertex gadget it holds that its unique neighbor with any vertex in $U\setminus N(v)$ is contained in~$S$ and thus from \Cref{en:s-neigh,en:vertex-gadget-to-anchor-1,en:vertex-gadget-to-anchor-2,en:VG-vertices-anchors} it follows that in~$G'[S]$ the vertex~$v$ has distance at most two to any anchor gadget vertex.
   Additionally, \Cref{en:V_C-anchor-gadget,en:vertex-gadget-to-anchor-1,en:vertex-gadget-to-anchor-2,en:omega-no-conflict} imply that~$v$ has distance at most two to all other vertex gadget vertices in~$S$, all coloring gadget vertices, and all edge vertices in~$S$. \Cref{en:vertex-gadget-to-anchor-2,en:s-neigh,en:V_C-anchor-gadget} imply that any coloring gadget vertex has distance at most two to all anchor vertices, coloring gadget vertices, and edge vertices. Finally, \Cref{en:vertex-gadget-to-anchor-2,en:s-neigh} show that each edge vertex has distance two to all anchor vertices.
 
 ''$\Leftarrow$`` Let~$S$ be a 2-club of size at least~$\size$. By \Cref{en:anchors} it follows that $U\subseteq S$ and by \Cref{en:fully-vertex-gadget} it follows that each vertex gadget is either fully contained in~$S$ or is disjoint to~$S$. Denote by~$C$ the vertices in~$G$ that correspond to the vertex gadgets that are fully contained in~$S$. First, since two vertices $\omega^v_{\alpha}$ and $\omega^{v'}_{\gamma}$,~$v\neq v'$, do not have a common neighbor if $c(v)=c(v')$ (\Cref{en:omega-no-conflict}) and there are only~$k$ colors, it follows that $|C|\le k$. Hence by \Cref{eq:size-V',eq:size-l} it follows that~$S$ contains at least~$\binom{k}{2}$ edge vertices. 
Since each edge vertex~$e_{v_i,v_j}$ needs to have a common neighbor with each vertex in~$U$ and the $\alpha$- and the $\gamma$- vertex neighbors of~$e_{v_i,v_j}$ are in different vertex gadgets, it follows that $\{v_i,v_j\}\subseteq C$. From this, since $|C|\le k$ it follows that $|C|=k$ and that~$S$ contains exactly $\binom{k}{2}$ edge vertices, implying that~$|C|$ induces a clique in~$G$. Finally, not that this clique is multicolored because of \Cref{en:omega-no-conflict}.
\end{proof}
\autoref{lem:h-index-hardness} imply several consequences.
\begin{corollary}
 \tclub is NP-hard on graphs with \emph{degeneracy} five.
\end{corollary}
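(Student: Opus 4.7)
The plan is to observe that this corollary is essentially a direct consequence of reduction~ii) from the preceding lemma, so the ``proof'' is merely an invocation. First I would note that \MCC is NP-hard: although the lemma is stated in the context of parameterized reductions (targeting the W[1]-hardness viewpoint), \MCC generalizes \Clique and is therefore NP-hard when $k$ is part of the input. The reductions in the lemma are explicitly polynomial-time computable, so they serve as classical polynomial-time reductions in addition to being parameterized reductions.

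Next I would apply reduction~ii) of the lemma to an arbitrary instance $(G,c,k)$ of \MCC. By the statement of the lemma, this yields in polynomial time an equivalent \tclub instance $(G',\size)$ such that $G'$ has degeneracy at most five (in fact, the lemma exhibits an explicit elimination ordering with each deleted vertex having degree at most five at the time of its removal, which is the defining property of degeneracy at most five). Since the reduction preserves yes- and no-instances and runs in polynomial time, NP-hardness transfers from \MCC to \tclub restricted to graphs of degeneracy at most five.

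There is no real obstacle here beyond bookkeeping: all of the work (constructing the gadgets, verifying correctness, bounding the degeneracy via the elimination ordering) has already been carried out in the proof of \autoref{lem:h-index-hardness}. The corollary simply records the structural-parameter consequence that reduction~ii) was designed to deliver.
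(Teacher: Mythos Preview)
Your proposal is correct and matches the paper's approach exactly: the paper does not give a separate proof for this corollary but simply records it as an immediate consequence of reduction~ii) in \autoref{lem:h-index-hardness}. Your observation that \MCC is NP-hard (as a generalization of \Clique) and that the polynomial-time reduction therefore transfers NP-hardness is precisely the implicit reasoning the paper relies on.
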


\begin{corollary}\label{thm:w1-hard-h-index}
	\tclub parameterized by \emph{$h$-index} is W[1]-hard. 
\end{corollary}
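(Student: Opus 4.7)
The plan is to derive this W[1]-hardness immediately from reduction i) in \autoref{lem:h-index-hardness}. That lemma already provides a polynomial-time computable reduction from \MCC to \tclub such that the constructed graph~$G'$ has \emph{$h$-index} at most~$k+7$, where $k$ is the number of colors (and the size of the sought clique) in the \MCC instance. Since \MCC is W[1]-hard when parameterized by~$k$~\cite{FHRV09}, and a polynomial-time reduction is in particular an FPT-time reduction, it remains only to observe that this reduction is a \emph{parameterized} reduction in the sense required.

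First I would recall the definition of a parameterized reduction: we need a mapping $(I,k) \mapsto (I',k')$ computable in $f(k)\cdot |I|^{O(1)}$ time such that $(I,k)$ is a yes-instance if and only if $(I',k')$ is, and such that $k' \le g(k)$ for some computable function~$g$. Here we set the source parameter to be~$k$ (the clique size in the \MCC instance) and the target parameter to be the \emph{$h$-index} of~$G'$. Reduction i) of \autoref{lem:h-index-hardness} runs in polynomial time, so the time bound is trivially satisfied, and its correctness — that $(G,c,k)$ is a yes-instance of \MCC if and only if $(G',\size)$ is a yes-instance of \tclub — is exactly the claim proved in that lemma.

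The only parameter bound to verify is $k' \le g(k)$, and this is given directly by \autoref{lem:h-index-hardness}: the \emph{$h$-index} of~$G'$ is at most~$k+7$, so we may take $g(k):=k+7$. Thus the composition yields a parameterized reduction from \MCC parameterized by~$k$ to \tclub parameterized by the \emph{$h$-index}, and W[1]-hardness of the latter follows from W[1]-hardness of the former.

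There is really no obstacle here, since all of the combinatorial work — the gadget construction, the bound on the $h$-index, and the equivalence of instances — has already been carried out in the proof of \autoref{lem:h-index-hardness}. The proof of the corollary therefore reduces to a one-line invocation of that lemma together with the W[1]-hardness of \MCC, plus the remark that $k+7$ is a linear (hence computable) function of the source parameter.
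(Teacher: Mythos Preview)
Your proposal is correct and matches the paper's approach exactly: the paper simply states this as an immediate corollary of \autoref{lem:h-index-hardness} without further argument, and your write-up just spells out the (routine) verification that reduction~i) is a parameterized reduction because the target parameter is bounded by~$k+7$.
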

Since the reduction in \autoref{lem:h-index-hardness} is 
from \MCC and in the reduction the new parameter is linearly bounded in the old one, the results of~\citet{CCF+05} imply the following.
\begin{corollary}
  \tclub cannot be solved in~$n^{o(k)}$-time on graphs with~\emph{$h$-index}~$k$ unless the exponential time hypothesis fails.
\end{corollary}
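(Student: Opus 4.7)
The plan is to derive the corollary by composing reduction~i) of \autoref{lem:h-index-hardness} with the known ETH-based lower bound for \MCC (equivalently, \Clique) due to \citet{CCF+05}, which states that \MCC on an $n$-vertex graph with $k$ color classes cannot be solved in $f(k)\cdot n^{o(k)}$ time for any computable function~$f$ unless the exponential time hypothesis fails.

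First, I would recall that reduction~i) of \autoref{lem:h-index-hardness} is polynomial-time computable and, given an \MCC instance $(G,c,k)$ on $n$ vertices, produces an equivalent \tclub instance $(G',\size)$ with $|V(G')|$ bounded by a fixed polynomial in $n$ (inspection of \Cref{eq:size-V'} gives $|V(G')| = O(n^3)$) and with $h$-index $k' \le k+7$. Thus the new parameter is linearly bounded in the old one.

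Next, I would argue by contradiction: assume \tclub is solvable in $|V(G')|^{o(k')}$ time on graphs with $h$-index~$k'$. Plugging in the output of reduction~i), this yields an algorithm for \MCC with running time at most
\[
\poly(n) + (O(n^3))^{o(k+7)} = n^{o(k)},
\]
since the exponent $o(k+7) = o(k)$ absorbs the constant shift and the polynomial in the base absorbs into the little-$o$ in the exponent. This contradicts the \citet{CCF+05} lower bound, proving the corollary.

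There is no real obstacle here beyond confirming that the two ingredients fit together cleanly: the reduction is polynomial in the instance size and linear in the parameter, which is exactly the shape required to transfer $n^{o(k)}$ lower bounds. The only subtlety worth mentioning explicitly is that the blow-up in the base from $n$ to $\poly(n)$ is harmless because $(\poly(n))^{o(k)} = n^{o(k)}$, and the additive constant $+7$ in the parameter is harmless because $o(k+7)=o(k)$.
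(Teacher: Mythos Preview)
Your proposal is correct and follows exactly the paper's approach: the paper merely remarks that reduction~i) of \autoref{lem:h-index-hardness} is from \MCC with the new parameter linearly bounded in the old one, so the lower bound of \citet{CCF+05} transfers. You have simply spelled out the details (the polynomial size blow-up $|V(G')|=O(n^3)$ and the additive shift $k'\le k+7$) that the paper leaves implicit.
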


\newcommand{\T}{\mathcal{T}}
We next prove that there is an XP-algorithm for the parameter \emph{$h$-index}.
\begin{theorem}\label{thm:XP-h-index}
 \tclub can be solved in $\XPhindexAlgRunTime$ time where~$k$ is the~\emph{$h$-index} of the input graph.
\end{theorem}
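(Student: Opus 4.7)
The plan is to exploit the defining property of the \emph{$h$-index}: the set $H := \{v \in V : \deg_G(v) > k\}$ of high-degree vertices has size at most~$k$, while every vertex in $L := V \setminus H$ has degree at most~$k$. I will branch in three successive layers---first the high-degree part of~$S$, then one representative per ``type'' of low-degree vertex used by~$S$, and finally a compatibility pattern that fixes how the distance-2 constraints are met---and then greedily complete each candidate skeleton to a maximum 2-club, using \autoref{obs:twin-containment} to add whole twin classes at once.

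First I would branch on $H' := S \cap H$; since $|H| \le k$ this gives at most $2^k$ branches, and in each one we look for the largest 2-club $S$ with $S \cap H = H'$. For every $v \in L$ define the type $T(v) := N(v) \cap H$, of which there are at most $2^k$ distinct values, partitioning~$L$ into classes~$L_T$. The structural fact driving the rest of the algorithm is that two low-degree vertices whose types share at least one vertex of~$H'$ are automatically at distance at most~2 via that common $H'$-vertex, and similarly a low-degree~$v$ is at distance at most~2 from~$h \in H'$ whenever $h \in T(v)$ or $T(v) \cap H' \neq \emptyset$; the only ``hard'' pairs requiring a witness inside~$L$ are the remaining ones, and such a witness is itself a low-degree vertex that must be forced into~$S$.

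Next I would branch, for each of the at most $2^k$ types, on one representative vertex of that type to include in~$S$ (with an ``empty'' choice allowed for types not used), yielding at most $n^{2^k}$ subbranches. Once~$H'$ and the representatives form a skeleton of size at most $k + 2^k$, the distance-2 certificates for all pairs inside the skeleton can be encoded by a compatibility pattern of $O(k^4)$ bits that records, for each ordered pair in the skeleton, whether the certifying path goes through an $H'$-vertex, through a specific low-degree witness, or via direct adjacency, together with the type of any such witness; branching over this pattern contributes the factor~$2^{k^4}$. With the skeleton and the pattern fixed, every remaining vertex of~$L$ is tested in polynomial time and added iff it is within distance~2 in~$G$ of every skeleton element through skeleton vertices or through vertices forced into~$S$ by the pattern; vertices lying in the same twin class with respect to the skeleton are handled collectively via \autoref{obs:twin-containment}, and the whole completion step runs in $O(n^2 m)$ time.

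The main obstacle I anticipate is formulating the compatibility pattern cleanly and proving it suffices: low-degree vertices may mutually certify one another's distance constraints, so the skeleton has to be closed under a local low-degree propagation rule, and the pattern must encode enough information to drive this closure deterministically. Correctness then reduces to showing that any 2-club~$S$ is recovered by some choice of $(H',\text{representatives},\text{pattern})$, while the converse check (that each candidate returned really is a 2-club of the claimed size) is a straightforward $O(n^2 m)$ verification. Multiplying the three branching factors and the verification cost yields the claimed $O(2^{k^4}\cdot n^{2^k}\cdot n^2 m)$ running time.
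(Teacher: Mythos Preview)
Your first two branching layers coincide with the paper's: guess $H'=S\cap H$ ($\le 2^k$ cases) and then one representative per twin class with respect to~$H'$ ($\le n^{2^k}$ cases).  The divergence---and the genuine gap---is in the third layer.

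The ``compatibility pattern of $O(k^4)$ bits'' cannot work as described.  Your skeleton has $k+2^k$ vertices, hence $\Theta(4^k)$ pairs; recording even one bit per pair already exceeds $O(k^4)$.  More fundamentally, your completion step is underspecified: you test a candidate low-degree vertex only against the \emph{skeleton}, but nothing guarantees that two non-skeleton vertices you greedily add are within distance~$2$ of \emph{each other} in~$G[S]$.  Finally, \autoref{obs:twin-containment} concerns genuine twins in~$G$, not vertices that merely share the same $H$-neighbourhood; two vertices of the same ``type'' may have disjoint low-degree neighbourhoods, so they cannot be added en bloc.

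The paper supplies the structural insight you are missing, and it is where the $2^{k^4}$ actually comes from.  Call two twin classes $T,T'$ \emph{in conflict} if $N(T)\cap N(T')\cap H'=\emptyset$.  Non-conflicting pairs are automatically at distance~$\le 2$ through~$H'$, so they require no further work.  For a class~$T$ that is in conflict with some~$T'$, any $v\in T\cap S$ and $w\in T'\cap S$ must share a low-degree common neighbour, which forces them into the same connected component of $G[S\setminus H']$; hence $T\cap S$ lies inside the radius-$4$ ball around its guessed centre~$c$ in $G-H'$.  Because $G-H'$ has maximum degree~$\le k$, this ball has at most~$k^4$ vertices, and one branches over its $\le 2^{k^4}$ subsets to pin down $T\cap S$ exactly.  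After this guessing and a final cleaning step, the remaining graph is itself a $2$-club---no greedy test is needed.  Your pattern idea does not capture this conflict/non-conflict dichotomy, and without it the argument cannot be completed within the stated bound.
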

\begin{proof}
 	\looseness=-1 We give an algorithm that finds a maximum 2-club in  a graph~$G'=(V',E')$ in $\XPhindexAlgRunTime$ time  where~$k$ denotes the $h$-index of~$G'$. 
	Let $X'\subseteq V'$ be the set of all vertices in~$G'$ with degree greater than~$k$. 
	By definition of the $h$-index, $|X'|\le k$. For the proof of correctness fix any maximum 2-club~$S$ in~$G'$.
	Throughout the algorithm via branching we will guess some vertices contained in~$S$ and we will collect them in the set~$P$. Then, cleaning the graph means to exhaustively remove all vertices that do not have distance at most two to all vertices in~$P$. These vertices cannot be contained in~$S$ and, clearly, if this requires to delete some vertex in~$P$ we will abort this branch.
	
	First, branch into the at most $2^k$ cases to guess the set $X=X'\cap S$ (potentially $X=\emptyset$). 
	Delete all vertices from~$X'\setminus X$, initialize~$P$ with~$X$, and clean the graph. Denoting the resulting graph by~$G=(V,E)$, we next describe how to find a maximum 2-club in~$G$ that contains~$X$.
	Towards this, consider the at most~$2^k$ twin classes of the vertices in~$V\setminus X$ with respect to~$X$. Branch into the $O(n^{2^k})$ cases to guess for each twin class~$T$ any vertex from~$T\cap S$, called the \emph{center} of~$T$. Clearly, if $T\cap S=\emptyset$, then there is no center and we delete all vertices in~$T$. Add all the centers to~$P$ and clean the graph.
	
	Two twin classes~$T$ and~$T'$ are in \emph{conflict} if $N^{G}(T)\cap N^{G}(T')\cap X=\emptyset$. 
	Now, the crucial observation is that, if~$T$ and~$T'$ are in conflict, then all vertices in $(T\cup T')\cap S$ are contained in the same connected component of~$G[S\setminus X]$, since otherwise they would not have pairwise distance at most two. However, this implies that all vertices in $T\cap S$ have pairwise distance at most four in~$G[S\setminus X]$.
	Hence, for each twin class~$T$ with center~$c$ that is in conflict to any other twin class it holds that $T\cap S\subseteq N^{G-X}_4[c]$  and since $G-X$ has maximum degree at most~$k$, one can guess $N^S_4[c]:=N^{G-X}_4[c]\cap S$ by branching into at most $2^{k^4}$ cases. 
	Delete all vertices in~$T$ guessed to be not contained in~$N^S_4[c]$, add $N^S_4[c]$ to~$P$, and clean the graph.
	Note that the remaining graph is a 2-club, since~$P$ contains~$X$ and the intersection of~$S$ with each twin class that is in conflict to any other twin class. By definition of twin classes that are in conflict, it holds that all other twin classes share a common neighbor in~$X$.
\end{proof}

%

\section{\bf Distance to (Co-)Cluster Graphs and Cographs}\label{sec:fpt-dist-to-cocluster}
In this section we present fixed-parameter algorithms for \tclub
parameterized by \emph{distance to co-cluster graphs}, by \emph{distance to cluster
graphs}, and by \emph{distance to cographs}. All these algorithms have running
time $2^{O(2^k)}\cdot n^{O(1)}$ which is roughly similar to the one
obtained for treewidth~\cite{HKN12}. For the weaker parameters the
constants in the exponential part of the running time are
smaller. Hence, none of the algorithms ``dominates'' one of the other
algorithms even with \emph{distance to cographs} being a provably smaller
parameter than \emph{distance to cluster graphs} or \emph{distance to co-cluster}
graphs (see \autoref{fig:result-overview}).  As already mentioned,
even for the considerably weaker parameter \emph{vertex cover} the
best known algorithm has running time $2^{O(2^k)}\cdot
n^{O(1)}$. In contrast, the parameter \emph{distance to
clique} which is unrelated to \emph{vertex cover} admits a trivial $O(2^k\cdot
nm)$-time algorithm, even in case of the general \sclub. This is
implied by the $O(2^k\cdot nm)$-time algorithm for the dual parameter
$n-\size$~\cite{SKMN11} which can be interpreted as \emph{distance to
2-clubs} and the fact that each clique is a 2-club.
%

\subsection{Distance to Co-Cluster Graphs and Distance to Cluster Graphs}

To complete the picture drawn in \autoref{fig:result-overview}, we also give short description of an algorithm for \tclub parameterized by the \emph{distance to co-cluster graphs}. A graph is a co-cluster graph if its complement graphs is cluster graph.

\begin{theorem}\label{thm:dist-co-cluster}
	\tclub is solvable in~$O(2^k\cdot 2^{2^k}\cdot nm)$ time where~$k$ denotes the \emph{distance to co-cluster graphs}.
\end{theorem}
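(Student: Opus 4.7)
The plan is to apply a two-level branching strategy that exploits the structural regularity of co-cluster graphs. I first compute in polynomial time a set $X \subseteq V$ of size at most $k$ such that $G - X$ is a co-cluster graph; this can be done, for example, by iteratively detecting an induced $P_3$ in the complement of $G - X$ and branching on its three vertices, since being a co-cluster graph is exactly the $\overline{P_3}$-free property on the complement. Let $V_1, \ldots, V_p$ denote the parts of the complete multipartite graph $G - X$. The crucial structural facts are that in $G-X$ any two vertices from different parts are adjacent, while any two vertices from the same part share every vertex of another part as a common neighbor.

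For the sought 2-club $S$, I would first branch on $X_S := S \cap X$ in $2^k$ ways. Once $X_S$ is fixed, define the \emph{label} of each vertex $v \in V \setminus X$ as $L(v) := N(v) \cap X_S$; there are at most $2^k$ distinct labels. The second level branches on the set $\mathcal{A} \subseteq 2^{X_S}$ of labels realized in $S \cap (V \setminus X)$, giving $2^{2^k}$ branches. For each pair $(X_S, \mathcal{A})$, form the candidate set
\begin{equation*}
C := X_S \;\cup\; \{\,v \in V \setminus X : L(v) \in \mathcal{A}\,\}
\end{equation*}
and test in $O(nm)$ time via BFS from each vertex whether $C$ is a 2-club, keeping the largest such $C$ across all branches.

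Correctness rests on the claim that for every maximum 2-club $S^{*}$ there is some branch whose candidate $C$ is a 2-club with $|C| \ge |S^{*}|$. The argument exploits the multipartite structure: two vertices $u,v \in V \setminus X$ with $L(u)=L(v)$ that additionally lie in the same part $V_i$ have identical neighborhoods inside $C$ and are therefore twins in $G[C]$, so by \autoref{obs:twin-containment} they can be jointly added to any 2-club containing one of them; and if $S^{*} \cap (V \setminus X)$ meets at least two distinct parts of $G-X$, then every same-part pair inside $S^{*} \cap (V\setminus X)$ has a common neighbor through the other part, so the ``closure'' $C$ does not spoil the 2-club property. The remaining case in which $S^{*} \cap (V \setminus X)$ lies entirely in a single part $V_i$ is handled, within the same branch, by additionally taking only those vertices of $C \cap (V \setminus X)$ that fall into one fixed part and verifying the required pairwise common neighbors inside $X_S$; the correct part is identified by inspecting at most $|\mathcal{A}|$ labels and can be resolved in $O(nm)$ time.

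The main obstacle I foresee is making the closure argument airtight when $S^{*}$ spans at least two parts. Concretely, one must rule out pathological pairs $(v,x)$ with $v \in C \setminus S^{*}$, $x \in X_S \setminus L(v)$, whose only distance-two witness for some $u \in S^{*}$ with $L(u) = L(v)$ happens to lie in the same part $V_i$ as $v$ and is therefore not adjacent to $v$. When such a configuration arises, the resolution is to perform a twin-swap replacing $u$ by $v$ (or extending $S^{*}$ by vertices of the distinguishing label class), producing an alternative maximum 2-club $S^{**}$ whose corresponding $(X_S,\mathcal{A})$-branch does admit its closure as a 2-club; hence some branch of the algorithm will find a solution of size at least $|S^{*}|$. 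With $2^k \cdot 2^{2^k}$ branches, each costing $O(nm)$, the total running time is the claimed $O(2^k \cdot 2^{2^k} \cdot nm)$.
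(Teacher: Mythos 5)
There is a genuine gap, and it sits exactly where you flagged it: the completeness claim that some branch's label-closure is a 2-club of maximum size is never proved, and the two repairs you sketch for the ``pathological pair'' case both fail. Consider $X = X_S = \{x,y\}$ with $x,y$ non-adjacent, parts $V_1 = \{u\}$ and $V_2 = \{w,v\}$, and labels $L(u) = L(v) = \{y\}$, $L(w) = \{x,y\}$. Then $S^{*} = \{x,y,u,w\}$ is a maximum 2-club (the pairs $(x,y)$ and $(u,x)$ are witnessed by $w$). Its branch has $\mathcal{A} = \{\{y\},\{x,y\}\}$, so the closure $C$ additionally contains $v$, and $(v,x)$ has no common neighbor in $C$: the neighbors of $v$ in $C$ are $y$ and $u$, neither adjacent to $x$, while $w$ lies in the same part as $v$. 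The twin-swap $S^{**} = \{x,y,v,w\}$ fails at the very same pair $(v,x)$, and extending $S^{*}$ by the distinguishing label class just reproduces $C$. The only same-size 2-club whose branch succeeds is $\{y,u,w,v\}$, obtained by \emph{deleting} the obstructed vertex $x$ from $X_S$ --- an operation performed by neither of your proposed fixes, and one that changes all labels (since labels are computed relative to $X_S$), so it can cascade; a correct completeness proof would have to control this cascade, which your argument does not do. (The algorithm itself may still be complete --- in this instance another branch succeeds --- but the proof as written does not establish it.) A secondary slip: you cannot compute an exact size-$k$ modulator ``in polynomial time by branching'' on the three vertices of a co-$P_3$; branching costs $3^{k}\cdot n^{O(1)}$, and the polynomial-time greedy variant only yields a factor-3 approximation, which would inflate the claimed bound. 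The paper sidesteps this by assuming $X$ is given with the input.

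It is worth seeing how the paper avoids your hard case entirely. A co-cluster graph $G-X$ is either an independent set or connected. If it is an independent set, then $X$ is a vertex cover and the paper invokes its earlier vertex-cover algorithm, which is essentially your label-class closure --- sound there because same-label vertices in an independent set are genuine twins, so \autoref{obs:twin-containment} applies without the cross-part complication. If $G-X$ is connected, it is complete multipartite and hence itself a 2-club of size $n-k$: either $\size \le n-k$ and the answer is trivially yes, or $n-\size < k$ and the known $O(2^{n-\size}\cdot nm)$ branching on vertex pairs at distance at least three (delete one of the two) runs within $O(2^{k}\cdot nm)$ time. So precisely the multi-part configuration where your closure argument breaks is the configuration the paper dispatches with a one-line observation plus the dual-parameter algorithm, and no closure lemma over parts is ever needed.
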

\begin{proof}
	Let $(G,X,\size)$ be an \tclub instance where~$X$ has $|X|=k$ and~$G-X$ is a co-cluster graph.
	Note that the co-cluster graph~$G-X$ is either a connected graph or an independent set.
	In the case that~$G-X$ is an independent set, the set~$X$ is a vertex cover
	and we thus apply the algorithm we gave in companion work~\cite{HKN12} to solve the instance in~$O(2^k\cdot 2^{2^k}\cdot nm)$ time.
	
	Hence, assume that~$G-X$ is connected.  Since~$G-X$ is the
	complement of a cluster graph, this implies that $G-X$ is a
	2-club. Thus, if~$\size \le n - k$, then we can trivially answer yes.
	Hence, assume that~$\size > n-k$ or, equivalently, $k > n -
	\size$.  \citet{SKMN11} showed that 2-club can be solved
	in~$O(2^{n-\size} nm)$  (simply choose a vertex
	pair having distance at least three and branch into the two
	cases of deleting one of them).  Since~$k > n-\size$ it
	follows that 2-club can be solved in~$O(2^k nm)$ time in this case. 
\end{proof}

Next, we present a fixed-parameter algorithm for the parameter
\emph{distance to cluster graphs}.

\begin{theorem}\label{thm:fpt-dist-cluster}
  \tclub is solvable in~$O(2^k\cdot 3^{2^k}\cdot nm)$ time
  where~$k$ denotes \emph{distance to cluster graphs}.
\end{theorem}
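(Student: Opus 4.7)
The approach I would take builds on the fact that $G - X$ decomposes into a disjoint union of cliques $C_1, \ldots, C_r$, so every vertex $v \in V \setminus X$ is fully described by the pair (clique containing $v$, type $\tau(v) := N(v) \cap X$), and there are at most $2^k$ distinct types. The first step is to branch over the $2^k$ possibilities for $X_S := X \cap S$, discarding the vertices of $X \setminus X_S$ and anchoring the rest of the search in this fixed set $X_S$.

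The next step exploits twins: for every (clique $C_i$, type $\tau$) pair, the vertices of $C_i$ with type $\tau$ form a twin class in $G$, because each such vertex is adjacent to all other vertices of $C_i$ and has the same neighbourhood $\tau$ in $X$. Hence by \autoref{obs:twin-containment}, either all or none of them belong to any fixed maximum $2$-club $S$. Combined with the structural constraint that two vertices of $V \setminus X$ lying in different cliques of $G - X$ can share a common neighbour only through $X$, this motivates classifying each type $\tau \subseteq X$ into one of three ``usage modes'': (a) no vertex of type $\tau$ belongs to $S$; (b) the type-$\tau$ vertices of $S$ come from exactly one clique of $G - X$; or (c) the type-$\tau$ vertices of $S$ come from at least two cliques, which is only possible when $\tau \cap X_S \neq \emptyset$. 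Iterating over all mode assignments contributes a factor of $3^{2^k}$.

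For any fixed mode assignment, I would handle mode-(c) types uniformly: all their vertices can be added to $S$, since any two type-$\tau$ vertices with $\tau \cap X_S \neq \emptyset$ have a common neighbour in $X_S$. Mode-(b) types still require picking a single clique each. I would argue via pairwise compatibility that, in a maximum solution, the mode-(b) contributions must all concentrate in one distinguished ``primary'' clique $C^\ast$ (because two mode-(b) types anchored in distinct cliques $C_i \neq C_j$ would force an intersection in $X_S$ that already places them in case (c)). Thus one additional loop over the $r \le n$ candidate primary cliques, together with a polynomial-time verification of the diameter-two property and size computation of the resulting candidate set, would suffice per branch.

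The main obstacle I expect is precisely this mode-(b) analysis: rigorously collapsing the clique-choices to essentially a single primary-clique selection, while also handling the dependencies between a mode-(b) type and every mode-(c) type (so that every mode-(c) vertex and every vertex $u \in X_S$ with $u \notin \tau$ still has a witness for distance two through some $u' \in X_S \cap \tau \cap N(u)$ or through the primary clique). Once these distance checks are encapsulated in a linear-time sweep, the total running time multiplies to $2^k$ (guess of $X_S$) times $3^{2^k}$ (mode assignment) times a polynomial factor $O(nm)$ (primary-clique loop, consistency check, diameter verification), yielding the claimed bound of $O(2^k \cdot 3^{2^k} \cdot nm)$.
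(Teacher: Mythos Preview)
Your single-primary-clique collapse for mode-(b) types is where the argument breaks. You write that ``two mode-(b) types anchored in distinct cliques $C_i \neq C_j$ would force an intersection in $X_S$ that already places them in case (c)''. The first half is correct: if $\tau$-vertices sit in $C_i$ and $\tau'$-vertices sit in $C_j\neq C_i$, then $\tau\cap\tau'\neq\emptyset$. But this does \emph{not} push either type into mode (c). Mode (c) is a statement about how many cliques contribute vertices of that type to $S$, not about whether the type meets~$X_S$ or meets another type. A type $\tau\neq\emptyset$ may well stay in mode (b) in a maximum solution, because promoting it to mode (c) can destroy the distance-two property with some other type $\tau''$ satisfying $\tau\cap\tau''=\emptyset$.

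Here is a concrete counterexample. Let $X_S=\{x_1,x_2,x_3,x_4\}$ induce a clique, and let $G-X$ consist of two cliques $C_1=\{a_1,a_2\}$ and $C_2=\{b_1,b_2\}$ with types $\tau(a_1)=\{x_1,x_3\}$, $\tau(a_2)=\{x_2,x_4\}$, $\tau(b_1)=\{x_1,x_4\}$, $\tau(b_2)=\{x_2,x_3\}$. All four types pairwise intersect \emph{across} cliques, so $S=X_S\cup C_1\cup C_2$ is a 2-club of size~$8$. Yet $\tau(a_1)\cap\tau(a_2)=\emptyset$ and $\tau(b_1)\cap\tau(b_2)=\emptyset$, so no type can be promoted to mode (c) without losing its partner in the same clique; all four types are genuinely mode (b), and they live in \emph{two} different cliques. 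Your scheme, which allows mode-(b) types only in a single primary clique, would return at most $|X_S|+2=6$.

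What is actually needed is to allow each clique to host its own set of mode-(b) types, subject only to the constraint that types used in different cliques are pairwise non-disjoint. The paper handles this by a dynamic program over the cliques: for each prefix $C_1,\ldots,C_i$ and each subset~$\mathcal{T}'$ of types, it stores the maximum number of vertices one can pick from $C_1\cup\cdots\cup C_i$ so that exactly the types in~$\mathcal{T}'$ occur and all distance-two constraints towards~$X'$ hold. The transition splits~$\mathcal{T}'$ into the set~$\tilde{\mathcal{T}}$ already realised in $C_1,\ldots,C_{i-1}$ and the set~$\mathcal{T}''$ realised in~$C_i$, requiring $\tilde{\mathcal{T}}\cup\mathcal{T}''=\mathcal{T}'$ and that no pair $(T,T')\in\tilde{\mathcal{T}}\times\mathcal{T}''$ is in conflict (disjoint). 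Summing over ordered triples $(\mathcal{T}'',\tilde{\mathcal{T}},\mathcal{T}\setminus\mathcal{T}')$ gives the $3^{2^k}$ factor. Your $3^{2^k}$ factor from mode assignments is the right order of magnitude, but it is spent on a classification that is too coarse; you still need the per-clique DP to combine the mode-(b) pieces correctly.
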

{\begin{proof}
  Let $(G,X,\size)$ be a \tclub instance where~$G-X$ is a
  cluster graph and ~$|X|=k$. First, branch into all possibilities to
  choose the subset $X'\subseteq X$ that is contained in the desired
  2-club~$S$. Then, remove $X\setminus X'$ and all vertices that are not
  within distance two to all vertices in~$X'$, and let~$G'=(V',E')$
  denote the resulting graph.

  Let ${\cal T}=T_1,\ldots,T_p$ be the set of twin classes of
  $V'\setminus X'$ with respect to~$X'$ and let~$C_1, \ldots, C_q$ denote the clusters of~$G'-X'$. Two twin
  classes~$T$ and~$T'$ are in \emph{conflict} if $N(T)\cap N(T')\cap
  X'=\emptyset$. 
  The three main observations exploited in the algorithm are the
  following: First, if two twin classes~$T_i$ and~$T_j$ are in
  conflict, then all vertices of~$T_i$ that are in a 2-club and all
  vertices from~$T_j$ that are in a 2-club must be in the same cluster
  of~$G'-X'$. Second, every vertex from~$G'-X'$ can reach all vertices
  in~$X'$ only via vertices of~$X'$ or via vertices in its own
  cluster. Third, if one 2-club-vertex~$v \in S$ is in a twin class~$v \in T_i$ and
  in a cluster~$v \in C_j$, then all vertices that are in~$T_i$
  and in~$C_j$ can be added to~$S$ without violating the 2-club property.

  We exploit these observations in a dynamic programming
  algorithm. In this algorithm, we create a two-dimensional
  table~$\cal{A}$ where an entry~${\cal A} [ i,{\cal T'}]$ stores the
  maximum size of a set~$Y\subseteq \bigcup_{1\le j\le i} C_j$ such
  that the twin classes of~$Y$ are \emph{exactly}~${\cal T'}\subseteq {\cal T}$
  and all vertices in~$Y$ have in~$G[Y\cup X']$ distance at most two
  to each vertex from~$Y\cup X'$.

  Before filling the table~${\cal A}$, we calculate a value~$s(i,{\cal
    T'})$ that stores the maximum number of vertices we can add from~$C_i$ that are from the twin classes in~$\cal{T'}$ and fulfill the requirements in the previous paragraph. This value is defined as follows. Let~$C_i^{\cal{T'}}$
  denote the maximal subset of vertices from~$C_i$ whose twin classes
  are exactly~${\cal T'}$. Then,~$s(i,{\cal T'})=|C_i^{\cal{T'}}|$
  if~$C_i^{\cal{T'}}$ exists and every pair of non-adjacent vertices
  from~$C_i^{\cal{T'}}$ and from~$X'$ have a common
  neighbor. Otherwise, set~$s(i,{\cal T'})=- \infty$. Note that as a
  special case we set~$s(i,\emptyset)=0$. Furthermore, for two
  subsets~${\cal T''}$ and~${\cal \tilde{T}}$ define the
  predicate~$\conf({\cal T''},{\cal \tilde{T}})$ as true if there is a
  pair of twin classes~$T_i\in {\cal T''}$ and~$T_j\in {\cal
    \tilde{T}}$ such that~$T_i$ and~$T_j$ are in conflict, and as
  false, otherwise.

  Using these values, we now fill~$\cal{A}$ with the following
  recurrence:
  \begin{multline*}
    {\cal A}[ i ,{\cal T'}]= \\ 
\mathop{\max}_{\cal{T''}\subseteq
      {T'},\cal{\tilde{T}}\subseteq {T'}}
    \begin{cases}
      {\cal A}[i-1,{\cal \tilde{T}}] + s(i,\cal{T''}) & \text{if } {\cal \tilde{T}} \cup {\cal T''}={\cal T'} \wedge \neg \conf({\cal \tilde{T}} ,{\cal T''}),\\
      -\infty & \text{otherwise.}
    \end{cases}
  \end{multline*}
  This recurrence considers all cases of combining a set~$Y$ for
  the clusters~$C_1$ to~$C_{i-1}$  with a solution~$Y'$ for the
  cluster~$C_i$. Herein, a positive table entry is only obtained when
  the twin classes of~$Y \cup Y'$ is exactly~${\cal T'}$ and the
  pairwise distances between~$Y\cup Y'$ and~$Y\cup Y'\cup X'$
  in~$G[Y\cup Y'\cup X']$ are at most two. The latter property is
  ensured by the definition of the~$s()$ values and by the fact that
  we consider only combinations that do not put conflicting twin classes
  in different clusters.
  
  Now, the table entry~${\cal A}[q,{\cal T'}]$ contains the size of a
  maximum vertex set~$Y$ such that in~$G'[Y\cup X']$ every vertex
  from~$Y$ has distance two to all other vertices. It remains to
  ensure that the vertices from~$X'$ are within distance two from each
  other. This can be done by only considering a table entry~${\cal A}[
  q, {\cal T'}]$ if each non-adjacent vertex pair~$x,x'\in X'$ has either a common
  neighbor in~$X'$ or in one twin class contained in~${\cal T'}$.  The
  maximum size of a 2-club in~$G'$ is then the maximum value of all
  table entries that fulfill this condition.  

  The running time can be bounded roughly as~$O(2^k\cdot 3^{2^k}\cdot
  nm)$: We try all~$2^k$ partitions of~$X$ and for each of these
  partitions, we fill a dynamic programming table with~$2^{2^k}\cdot
  n$ entries. The number of overall table lookups and updates is~$O(3^{2^k}\cdot n)$ since there are~$3^{2^k}$ possibilities to partition~$\cal T$ into the three sets~$\cal T''$, $\cal \tilde{T}$, and~$\cal T\setminus \cal T'$. Since each~$C_i$ is a clique, the entry~$s(i,{\cal T'})$ is computable in $O(nm)$ time and the overall running time follows.
\end{proof}}

\subsection{Distance to Cographs}
We proceed by describing the fixed-parameter algorithm with respect to the parameter \emph{distance to cographs}. Recall that since cographs are exactly the $P_4$-free graphs, any connected component of a cograph is a 2-club. 

\newcommand{\Pa}{\varUpsilon}
\begin{theorem}\label{thm:cographs}
 \tclub is solvable in   $O(2^k\cdot 8^{2^k}\cdot n^4)$ time where~$k$ denotes the \emph{distance to cographs}.
\end{theorem}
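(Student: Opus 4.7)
The plan is to adapt the dynamic programming framework of \autoref{thm:fpt-dist-cluster} by replacing its intra-cluster argument (which was trivial since a clique is automatically a 2-club) with a bottom-up dynamic program on the cotree of each connected component of $G-X$.

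First I would branch over all $2^k$ subsets $X' \subseteq X$ to guess which vertices of $X$ lie in the sought 2-club $S$. After each branch, delete $X \setminus X'$ and every vertex not within distance two of all vertices in $X'$; call the resulting graph $G'$. Since cographs are hereditary, $G' - X'$ is still a cograph. Partition $V(G') \setminus X'$ into at most $2^k$ twin classes $\mathcal{T}$ with respect to $X'$ and compute the conflict relation on $\mathcal{T}$ exactly as in \autoref{thm:fpt-dist-cluster}. The outer dynamic program across the connected components $C_1, \dots, C_q$ of $G' - X'$ is structurally identical to the cluster-graph one: it processes components one by one, carries the set of already-represented twin classes as its state, and forbids ever splitting a conflicting pair of twin classes across two different components, since such a pair cannot be resolved except through $X'$.

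The new ingredient is the subroutine computing, for each component $C_i$ and each profile $\mathcal{T}' \subseteq \mathcal{T}$, the value $s(i,\mathcal{T}')$: the maximum $|Y_i|$ such that $Y_i \subseteq C_i$ realizes exactly the twin classes $\mathcal{T}'$ and every pair of vertices in $Y_i$ lies within distance two in $G[Y_i \cup X']$. Because $C_i$ is a connected cograph, its cotree can be computed in linear time and expresses $C_i$ via alternating \emph{join} and \emph{union} operations on smaller cographs. I would run a bottom-up DP along this cotree in which the entry at a node $v$ is indexed by an enriched signature that spends a constant number of bits per twin class: for each twin class the signature records whether it is represented in $Y \cap V(G_v)$, whether its currently chosen representatives still need a future join partner in order to be distance-two connected to the rest of $Y$, and whether it has already been locked in via an internal common neighbour. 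Three bits per twin class suffice and yield $8^{2^k}$ signatures per cotree node.

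The transitions exploit the cograph structure directly. At a join node, every vertex of one child is adjacent to every vertex of the other, so picking at least one vertex from each child drives every cross-pair to distance one and every same-side non-adjacent pair to distance two via the opposite side; all ``needs join partner'' flags of the combined signature collapse to the resolved state. At a union node the two children remain disconnected in $G'-X'$, so any cross-pair has to be resolved solely through $X'$, forcing its twin-class pair to be conflict-free and, if it was still marked as unresolved, rejecting the combination. Reading off $s(i,\mathcal{T}')$ at the cotree root amounts to restricting to signatures in which no twin class carries an open requirement; plugging these values into the outer DP and finally verifying, as in \autoref{thm:fpt-dist-cluster}, that every non-adjacent pair of $X'$ has a common neighbour either in $X'$ or in a twin class of the chosen profile produces the maximum 2-club size. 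The main obstacle will be engineering the signature so that $O(1)$ bits per twin class really suffice to decide the 2-club property under both union and join compositions, and in particular to make sure that union nodes never silently discard a yet-to-be-resolved requirement that a later ancestor join could have satisfied. Once this is in place, $2^k$ branches on $X'$, $O(n)$ cotree nodes per component, $8^{2^k}$ signatures per node, and polynomial-in-$n$ bookkeeping for merging tables across components combine to the claimed $O(2^k \cdot 8^{2^k} \cdot n^4)$ running time.
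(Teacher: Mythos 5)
You have the right skeleton---branching over the $2^k$ guesses for $X\cap S$, cleaning, twin classes with respect to the guessed set, and resolving pairs split across parallel parts (or components) through common neighbours in the guessed set---and this matches the paper's setup; your outer per-component DP is even harmless, since it coincides with how the paper's single cotree DP treats the topmost parallel nodes. The genuine gap is in the cotree subroutine, which is the heart of the theorem and which you yourself flag as an unresolved ``main obstacle.'' Three concrete problems. First, your definition of $s(i,\mathcal{T}')$ only requires pairs \emph{inside} $Y_i$ to be at distance two in $G[Y_i\cup X']$; it drops the requirement that every vertex of $Y_i$ be within distance two of every vertex of $X'$ in the \emph{selected} subgraph. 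The initial cleaning does not substitute for this: it guarantees a mediating middle vertex exists in the whole graph, not that it is selected into $Y_i$. Second, your join-node transition (``all needs-join-partner flags collapse to the resolved state'') is false precisely for these vertex--$X'$ pairs: a join partner $w$ is adjacent to $u$, but it resolves the pair $(u,x)$ with $x\in X'$ only if $w$'s twin class is adjacent to $x$, so a join does not automatically discharge pending requirements. Third, a constant number of bits per twin class cannot carry the needed state: two chosen vertices of the \emph{same} twin class in the same subtree can differ in which $x\in X'$ they still lack a mediator for, so open requirements live per vertex and per element of $X'$, not per class; nothing in your proposal shows this collapses to three bits, and your $8^{2^k}$ signature count is therefore unsubstantiated (in the paper this factor arises differently, as $2^{2^k}$ table entries times $4^{2^k}$ work per entry for splitting twin-class sets).

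The paper avoids deferred state altogether, and this is the idea you are missing. Its table entry $\Gamma(P,\mathcal{T})$ insists that all pairs, including vertex--$X$ pairs, are already resolved within $G[L\cup X]$, and the key observation concerns series nodes: after exhaustively deleting candidate vertices at distance more than two from some vertex of $X$, if the cleaned candidate set meets both children then every internal pair is automatically at distance at most two (cross pairs are adjacent, same-side pairs go through the other side), and all vertex--$X$ distances are realized because one takes the \emph{entire} cleaned set, so every potential mediator is present. Hence at a series node the optimum is either the whole cleaned candidate set or lies in a single child, and at parallel nodes cross pairs must be consistent via common neighbours in $X$---no ``pending'' flags ever arise, which is exactly what dissolves your worry that union nodes might silently discard a requirement that an ancestor join could have satisfied. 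Without this take-everything-at-series-nodes argument, or a corrected (and substantially larger) signature tracking per-class, per-$x$ obligations, your inner dynamic program does not go through.
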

\begin{proof}
 Let~$G'$ be the input graph of a \tclub instance. Moreover, let~$X'$ be a vertex subset of~$G'$ whose deletion results in a cograph. We next describe a fixed-parameter algorithm with respect to $k=|X'|$ that finds  a maximum-size 2-club in~$G$. For our correctness proof we fix any maximum 2-club~$S$ in~$G'$. First branch into the at most $2^k$ cases to guess $X=X'\cap S$. Delete all vertices in~$X'\setminus X$.
Denoting by $G=(V,E)$ the remaining graph, observe that $G-X$ is a cograph.
 The remaining task is to find a maximum 2-club in~$G$ that contains~$X$.
 
 Before proceeding to describe the algorithm 
we introduce the following characterization of cographs~\cite{BLS99}: A graph is a cograph if it can be constructed from single vertex graphs by a sequence of parallel and series compositions. 
 Given~$t$ vertex disjoint graphs~$G_i=(V_i,E_i)$, the \emph{series composition} is the graph
 $(\bigcup_{i=1}^{t} V_i,\bigcup_{i=1}^{t} E_i \cup \{\{v,u\}\mid v\in G_i \wedge u\in G_j\wedge 1\le i<j\le t\}$ and the \emph{parallel composition} is $(\bigcup_{i=1}^{t}V_i, \bigcup_{i=1}^{t} E_i)$. 
 The corresponding \emph{cotree} of a cograph~$G$ is the tree whose leaves correspond to the vertices in~$G$ and each inner node represents a series or parallel composition of its children up to a root which represents~$G$.

 We next describe a dynamic programming algorithm that proceeds in a bottom-up manner on the cotree of $G-X$ and finds a maximum 2-club in~$G$ that contains~$X$.
 We may assume that $t=2$ for all series and parallel compositions, as otherwise we can simply split up the corresponding nodes in the cotree. For each node~$P$ in the cotree let $V(P)\subseteq V\setminus X$ be the vertices corresponding to the leaves of the subtree rooted in~$P$. Furthermore, consider the (at most~$2^k$ many) twin classes of $V\setminus X$ with respect to~$X$ and for a subset of twin classes~$\T$ let $V(\T)=\bigcup_{T\in \T} T$ denote the union of all vertices in the twin classes of~$\T$.
 We compute a table~$\Gamma$ where for any subset of twin classes~$\T$ and any node~$P$ of the cotree the entry~$\Gamma(P,\T)$ is the size of a largest set~$L\subseteq V(P)\cap V(\T)$ that fulfills the following properties:
 \begin{enumerate}
  \item for all $T\in \T:T\cap L\neq \emptyset$ and \label{prop-twin}
   \item for all $v\in L\cup X$ and $u\in L$: $\dist_{G[L\cup X]}(u,v)\le 2$ \label{prop-inner-X}
 \end{enumerate}
The intention of the definition above is that the graph $G[L\cup X]$ is a ``2-club-like'' structure that contains a vertex from each twin class in~$\T$ (\Cref{prop-twin}) and for any pair of vertices, except those where both vertices are from~$X$, have distance at most two (\Cref{prop-inner-X}). Denoting the root of the cotree by~$r$ and by~$\T_s$ the set of all twin classes that have a non-empty intersection with~$S$, $\Gamma(r,\T_s)\ge |S\setminus X|$ as~$S\setminus X$ trivially fulfills all properties.  Reversely, for any subset of twin class~$\T$ that contains for each pair of vertices $\{u,v\}\in X$ with $\dist_{G[X]}(u,v)>2$ a twin class $T\in \T$ with $\{u,v\}\subseteq N(T)$, any set corresponding to~$\Gamma(r,\T)$ forms together with~$X$ a 2-club.
 
 We now describe the dynamic programming algorithm. Let~$P$ be a leave node of the cotree with $V(P)=\{x\}$ and let $\T$ be any subset of twin classes. 
The two sets $\{x\}$ and~$\emptyset$ are the only candidates for~$L$. Hence we set $\Gamma(P,\T):=1$ if~${x}$ fulfills both properties,~$\Gamma(P,\emptyset):=0$ ($\emptyset$ fulfills both properties), and~$\Gamma(P,\T)=-\infty$ otherwise.
  
  Next we describe the dynamic programming algorithm for inner nodes of the cotree.
  Let~$P$ be any node of the cotree with children~$P_1,P_2$ and let $\T$ be any subset of twin classes. We construct a graph~$G^P$ by exhaustively deleting in $G[(V(P)\cap V(\T))\cup X]$ all vertices from $V(P)\cap V(\T)$ that have distance more than two to any vertex in~$X$. (Clearly, such a vertex has to be deleted because of \Cref{prop-inner-X}.)
  If the resulting graph~$G^P$ violates \Cref{prop-twin}, then there is no set corresponding to $\Gamma(P,\T)$ and thus we set the entry to be~$-\infty$. Additionally, if~$G^P$ fulfills all properties, then set $\Gamma(P,\T)=|V(G^P)|-|X|$.
  To handle the remaining case where~$G^P$ violates only \Cref{prop-inner-X} we make a case distinction on the node type of~$P$.
  
  \emph{Case~1: $P$ is a series node.}\\
  Let $\{u,v\}\subseteq V(G^P)\setminus X$ be a vertex pair with $\dist_{G^P}(u,v)>2$. Since a series composition introduces an edge between each vertex in~$V(P_1)$ and each vertex in~$V(P_2)$ and $V(G^P)\subseteq V(P)=V(P_1)\cup V(P_2)$, it follows that either $V(G^P)\cap V(P_1)=\emptyset$ or $V(G^P)\cap V(P_2)=\emptyset$. 
 This implies that $\Gamma(P,\T)=\max\{\Gamma(P_1,\T),\Gamma(P_2,\T)\}$.
 
  \emph{Case~2: $P$ is a parallel node.}\\
  Consider any set~$L$ that corresponds to $\Gamma(P,\T)$. 
  By the definition of a parallel node there is no edge between a vertex from $V(P_1)$ to a vertex in~$V(P_2)$. Consequently, any pair of vertices in~$L$ with one vertex in~$V(P_1)$ and the other in~$V(P_2)$ have a common neighbor in~$X$. 
  Correspondingly, we say that two twin classes are \emph{consistent} if they have at least one common neighbor in~$X$ and two sets of twin classes are consistent if any twin class of the first set is consistent with any twin class of the second set. 
  Denoting by $\T^S_1$ ($\T^S_2$) the set of twin classes with a non-empty intersection with $L\cap V(P_1)$ ($L\cap V(P_2)$), by the argumentation above it follows that $\T^S_1$ is consistent with~$\T^S_2$.
  Additionally, it is straightforward to verify that $L\cap  V(P_1)$ ($L\cap  V(P_2)$) fulfills all properties (except being a largest set) for the entry $\Gamma(\T^S_1,P_1)$ ($\Gamma(P_2,\T^S_2)$). 

  Reversely, for any two consistent sets of twin classes $\T_1,\T_2$ let~$L_1$ ($L_2$) be any vertex set that corresponds to $\Gamma(P_1,\T_1)$ ($\Gamma(P _2,\T_2)$).
	It holds that $L_1\cup L_2$ fulfills all properties for $\Gamma(P,\T_1\cup \T_2)$ and hence $\Gamma(P,\tilde{\T}\cup \T_2)\ge |L_1\cup L_2|$.
  Hence it is correct to set $\Gamma(\T,P)$ to be the largest value of $\Gamma(P_1,\T_1)+\Gamma(P_2,\T_2)$ where $\T_1,\T_2$ are consistent and $\T_1 \cup \T_2=\T$. This completes the description of the algorithm.

  The table $\Gamma$ contains $O(n\cdot 2^{2^k})$ entries as there are at most $2^k$ twin classes. Each entry can be computed in $O(n^3+4^{2^k})$ time. In total, together with the factor of~$2^k$ needed guess~$X$, the running time of the above algorithm is 
  $O(2^k\cdot 8^{2^k}\cdot n^4)$.
\end{proof}

\section{Conclusion}
We have resolved the complexity status of \tclub for most of the parameters in the complexity landscape shown in \autoref{fig:result-overview}. Still, several open questions remain. 
First, there are obviously parameters for which the parameterized
complexity is still open. For example, is~\tclub parameterized by
\emph{distance to interval graphs} or by \emph{distance to 2-club cluster graphs} in XP or even~fixed-parameter
tractable? 
In this context, also parameter combinations could be of interest. Clearly a complete investigation of the parameter space is infeasible. Hence, one should focus only on practically relevant parameter combinations. One example could be the following question that is left open by the hardness results for \emph{$h$-index} and \emph{degeneracy}. Is~\tclub also W[1]-hard with respect to the parameter~\emph{$h$-index} if the input graph has constant \emph{degeneracy}?
Second, it remains open whether
there is a polynomial kernel for the parameter \emph{distance
to clique} or to identify further non-trivial structural parameters for which polynomial kernels exist. Third, for many of the presented fixed-parameter
tractability results it would be interesting to either improve the running
times or to obtain tight lower bounds. For example, is it possible to solve \tclub parameterized
by \emph{distance to clique} in~$\delta^k\cdot n^{O(1)}$ time for
some~$\delta<2$? Similarly, is it possible to solve \tclub
parameterized by \emph{vertex cover} in~$2^{o(2^k)}\cdot n^{O(1)}$ time?
An answer to the latter question could be a first step towards
improving the (also doubly exponential) running time of the algorithms
for the parameters \emph{treewidth} or \emph{distance to cographs}.
Finally, it would be interesting to see which results carry over to
\textsc{3-Club}~\cite{Pasu08,BM11} or to the related~\textsc{$2$-Clique} problem~\cite{BBT05}. 

{\footnotesize			
 \setlength{\bibsep}{2pt}
\bibliographystyle{abbrvnat}
\bibliography{bibliography}

\newcommand{\bibremark}[1]{\marginpar{\tiny\bf#1}}
\begin{thebibliography}{33}
\providecommand{\natexlab}[1]{#1}
\providecommand{\url}[1]{\texttt{#1}}
\expandafter\ifx\csname urlstyle\endcsname\relax
  \providecommand{\doi}[1]{doi: #1}\else
  \providecommand{\doi}{doi: \begingroup \urlstyle{rm}\Url}\fi

\bibitem[Alba(1973)]{Alb73}
R.~Alba.
\newblock A graph-theoretic definition of a sociometric clique.
\newblock \emph{Journal of Mathematical Sociology}, 3\penalty0 (1):\penalty0
  113--126, 1973.

\bibitem[Almeida and Carvalho(2012)]{AC12}
M.~T. Almeida and F.~D. Carvalho.
\newblock Integer models and upper bounds for the 3-club problem.
\newblock \emph{Networks}, 60\penalty0 (3):\penalty0 155--166, 2012.

\bibitem[Asahiro et~al.(2010)Asahiro, Miyano, and Samizo]{AMS10}
Y.~Asahiro, E.~Miyano, and K.~Samizo.
\newblock Approximating maximum diameter-bounded subgraphs.
\newblock In \emph{Proc.\ 9th LATIN}, volume 6034 of \emph{LNCS}, pages
  615--626. Springer, 2010.

\bibitem[Balasundaram et~al.(2005)Balasundaram, Butenko, and
  Trukhanovzu]{BBT05}
B.~Balasundaram, S.~Butenko, and S.~Trukhanovzu.
\newblock Novel approaches for analyzing biological networks.
\newblock \emph{Journal of Combinatorial Optimization}, 10\penalty0
  (1):\penalty0 23--39, 2005.

\bibitem[Barab{\'a}si and Albert(1999)]{BA99}
A.~Barab{\'a}si and R.~Albert.
\newblock Emergence of scaling in random networks.
\newblock \emph{Science}, 286\penalty0 (5439):\penalty0 509, 1999.

\bibitem[Bodlaender et~al.(2009)Bodlaender, Downey, Fellows, and
  Hermelin]{BDFH09JCSS}
H.~L. Bodlaender, R.~G. Downey, M.~R. Fellows, and D.~Hermelin.
\newblock On problems without polynomial kernels.
\newblock \emph{Journal of Computer and System Sciences}, 75\penalty0
  (8):\penalty0 423--434, 2009.

\bibitem[Bodlaender et~al.(2012)Bodlaender, Jansen, and Kratsch]{BJK12}
H.~L. Bodlaender, B.~M.~P. Jansen, and S.~Kratsch.
\newblock Kernel bounds for structural parameterizations of pathwidth.
\newblock In \emph{Proc.\ 13th SWAT}, volume 7357 of \emph{LNCS}, pages
  352--363. Springer, 2012.

\bibitem[Bourjolly et~al.(2000)Bourjolly, Laporte, and Pesant]{BLP00}
J.-M. Bourjolly, G.~Laporte, and G.~Pesant.
\newblock Heuristics for finding $k$-clubs in an undirected graph.
\newblock \emph{Computers {\&} Operations Research}, 27\penalty0 (6):\penalty0
  559--569, 2000.

\bibitem[Bourjolly et~al.(2002)Bourjolly, Laporte, and Pesant]{BLP02}
J.-M. Bourjolly, G.~Laporte, and G.~Pesant.
\newblock An exact algorithm for the maximum $k$-club problem in an undirected
  graph.
\newblock \emph{European Journal of Operational Research}, 138\penalty0
  (1):\penalty0 21--28, 2002.

\bibitem[Brandst{\"a}dt et~al.(1999)Brandst{\"a}dt, Le, and Spinrad]{BLS99}
A.~Brandst{\"a}dt, V.~B. Le, and J.~P. Spinrad.
\newblock \emph{Graph Classes: a Survey}, volume~3 of \emph{SIAM Monographs on
  Discrete Mathematics and Applications}.
\newblock SIAM, 1999.

\bibitem[Carvalho and Almeida(2011)]{AC11}
F.~D. Carvalho and M.~T. Almeida.
\newblock Upper bounds and heuristics for the 2-club problem.
\newblock \emph{European Journal of Operational Research}, 210\penalty0
  (3):\penalty0 489--494, 2011.

\bibitem[Chang et~al.(2012)Chang, Hung, Lin, and Su]{CHLS11}
M.-S. Chang, L.-J. Hung, C.-R. Lin, and P.-C. Su.
\newblock Finding large k-clubs in undirected graphs.
\newblock \emph{Computing}, 2012.
\newblock online available.

\bibitem[Chen et~al.(2005)Chen, Chor, Fellows, Huang, Juedes, Kanj, and
  Xia]{CCF+05}
J.~Chen, B.~Chor, M.~Fellows, X.~Huang, D.~W. Juedes, I.~A. Kanj, and G.~Xia.
\newblock Tight lower bounds for certain parameterized {NP}-hard problems.
\newblock \emph{Information and Computation}, 201\penalty0 (2):\penalty0
  216--231, 2005.

\bibitem[Downey and Fellows(1999)]{DF99}
R.~G. Downey and M.~R. Fellows.
\newblock \emph{Parameterized Complexity}.
\newblock Springer, 1999.

\bibitem[Eppstein and Spiro(2009)]{ES09}
D.~Eppstein and E.~S. Spiro.
\newblock The {\it h}-index of a graph and its application to dynamic subgraph
  statistics.
\newblock In \emph{Proc.~11th~WADS}, volume 5664 of \emph{LNCS}, pages
  278--289. Springer, 2009.

\bibitem[Fellows(2009)]{Fel09}
M.~R. Fellows.
\newblock Towards fully multivariate algorithmics: Some new results and
  directions in parameter ecology.
\newblock In \emph{Proc.~20th~IWOCA}, volume 5874 of \emph{LNCS}, pages 2--10.
  Springer, 2009.

\bibitem[Fellows et~al.(2009)Fellows, Hermelin, Rosamond, and Vialette]{FHRV09}
M.~R. Fellows, D.~Hermelin, F.~A. Rosamond, and S.~Vialette.
\newblock On the parameterized complexity of multiple-interval graph problems.
\newblock \emph{Theoretical Computer Science}, 410\penalty0 (1):\penalty0
  53--61, 2009.

\bibitem[Flum and Grohe(2006)]{FG06}
J.~Flum and M.~Grohe.
\newblock \emph{Parameterized Complexity Theory}.
\newblock Springer, 2006.

\bibitem[Golovach et~al.(2013)Golovach, Heggernes, Kratsch, and Rafiey]{GHKR13}
P.~A. Golovach, P.~Heggernes, D.~Kratsch, and A.~Rafiey.
\newblock Cliques and clubs.
\newblock In \emph{Proc.\ 8th CIAC}, LNCS. Springer, 2013.
\newblock To appear.

\bibitem[Hartung et~al.(2012{\natexlab{a}})Hartung, Komusiewicz, and
  Nichterlein]{HKN12}
S.~Hartung, C.~Komusiewicz, and A.~Nichterlein.
\newblock Parameterized algorithmics and computational experiments for finding
  2-clubs.
\newblock In \emph{Proc.~7th~IPEC}, volume 7535 of \emph{LNCS}, pages 231--241.
  Springer, 2012{\natexlab{a}}.

\bibitem[Hartung et~al.(2012{\natexlab{b}})Hartung, Komusiewicz, and
  Nichterlein]{HKN12b}
S.~Hartung, C.~Komusiewicz, and A.~Nichterlein.
\newblock On structural parameterizations for the 2-club problem.
\newblock In \emph{Proc.\ 39th SOFSEM}, volume 7535 of \emph{LNCS}, pages
  231--241. Springer, 2012{\natexlab{b}}.

\bibitem[Jansen and Kratsch(2011)]{JK11}
B.~M.~P. Jansen and S.~Kratsch.
\newblock On polynomial kernels for structural parameterizations of odd cycle
  transversal.
\newblock In \emph{Proc.\ 7th IPEC}, volume 7112 of \emph{LNCS}, pages
  132--144. Springer, 2011.

\bibitem[Jeong et~al.(2000)Jeong, Tombor, Albert, Oltvai, and
  Barab{\'a}si]{JTAOB00}
H.~Jeong, B.~Tombor, R.~Albert, Z.~Oltvai, and A.~Barab{\'a}si.
\newblock The large-scale organization of metabolic networks.
\newblock \emph{Nature}, 407\penalty0 (6804):\penalty0 651--654, 2000.

\bibitem[Komusiewicz and Niedermeier(2012)]{KN12}
C.~Komusiewicz and R.~Niedermeier.
\newblock New races in parameterized algorithmics.
\newblock In \emph{Proc.\ 37th MFCS}, volume 7464 of \emph{LNCS}, pages 19--30.
  Springer, 2012.

\bibitem[Mahdavi and Balasundaram(2012)]{BM11}
F.~Mahdavi and B.~Balasundaram.
\newblock On inclusionwise maximal and maximum cardinality $k$-clubs in graphs.
\newblock \emph{Discrete Optimization}, 2012.
\newblock To appear.

\bibitem[Memon and Larsen(2006)]{ML06}
N.~Memon and H.~L. Larsen.
\newblock Structural analysis and mathematical methods for destabilizing
  terrorist networks using investigative data mining.
\newblock In \emph{Proc.~2nd~ADMA}, volume 4093 of \emph{LNCS}, pages
  1037--1048. Springer, 2006.

\bibitem[Mokken(1979)]{Mok79}
R.~J. Mokken.
\newblock {C}liques, {C}lubs and {C}lans.
\newblock \emph{Quality and Quantity}, 13:\penalty0 161--173, 1979.

\bibitem[Niedermeier(2006)]{Nie06}
R.~Niedermeier.
\newblock \emph{Invitation to Fixed-Parameter Algorithms}.
\newblock Oxford University Press, 2006.

\bibitem[Niedermeier(2010)]{Nie10}
R.~Niedermeier.
\newblock Reflections on multivariate algorithmics and problem
  parameterization.
\newblock In \emph{Proc.~27th~STACS}, volume~5 of \emph{LIPIcs}, pages 17--32.
  Schloss Dagstuhl--Leibniz-Zentrum f{\"u}r Informatik, 2010.

\bibitem[Pasupuleti(2008)]{Pasu08}
S.~Pasupuleti.
\newblock Detection of protein complexes in protein interaction networks using
  $n$-{C}lubs.
\newblock In \emph{Proc.~6th~EvoBIO}, volume 4973 of \emph{LNCS}, pages
  153--164. Springer, 2008.

\bibitem[Sch{\"a}fer et~al.(2012)Sch{\"a}fer, Komusiewicz, Moser, and
  Niedermeier]{SKMN11}
A.~Sch{\"a}fer, C.~Komusiewicz, H.~Moser, and R.~Niedermeier.
\newblock Parameterized computational complexity of finding small-diameter
  subgraphs.
\newblock \emph{Optimization Letters}, 6\penalty0 (5), 2012.

\bibitem[Schäfer(2009)]{Sch09}
A.~Schäfer.
\newblock Exact algorithms for s-club finding and related problems.
\newblock Diploma thesis, Friedrich-Schiller-Universität Jena, 2009.

\bibitem[Seidman and Foster(1978)]{SF78}
S.~B. Seidman and B.~L. Foster.
\newblock A graph-theoretic generalization of the clique concept.
\newblock \emph{Journal of Mathematical Sociology}, 6:\penalty0 139--154, 1978.

\end{thebibliography}
}
%
\end{document}